\def\ps@pprintTitle{
 \let\@oddhead\@empty
 \let\@evenhead\@empty
 \def\@oddfoot{\centerline{\thepage}
 }
 \let\@evenfoot\@oddfoot}
\renewcommand{\P}{\mathbb{P}}
\newcommand{\Q}{\mathbb{Q}}
\newcommand{\E}{\mathbb{E}}
\newcommand{\F}{\mathcal{F}}
\newcommand{\M}{\mathcal{M}}
\newcommand{\T}{\mathcal{T}}
\newcommand{\R}{\mathbb{R}}
\newcommand{\NN}{\mathbb{N}}
\renewcommand{\L}{\mathcal{L}}
\renewcommand{\d}{\mathrm{d}}
\newcommand{\e}{\mathrm{e}}
\newcommand{\1}{\mathbf{1}}
\newcommand{\w}{\mathbf{w}}
\newcommand{\y}{\mathbf{y}}
\newcommand{\W}{\mathcal{W}}
\newcommand{\D}{\mathcal{D}}
\newcommand{\tp}{\tau^{\tt{p}}}
\newcommand{\tn}{\tau^{\tt{n}}}
\newcommand{\CFu}{{\CF}^{\tt{u}}}
\newcommand{\CFk}{{\CF}^{\tt{k}}}
\newcommand{\Iu}{I^{\tt{u}}}
\newcommand{\Ik}{I^{\tt{k}}}
\newcommand{\ID}{\mathcal{ID}}
\newcommand{\0}{\boldsymbol{0}}
\newcommand{\CF}{{\tt{CF}}}
\newcommand{\FV}{{\tt{FV}}}
\newtheorem{thm}{Theorem}[section]
\newtheorem{cor}{Corollary}[section]
\newtheorem{dfn}{Definition}[section]
\newtheorem{rem}{Remark}[section]
\newtheorem{exm}{Example}
\newtheorem{lem}{Lemma}[section]
\newtheorem{prop}[thm]{Proposition}
\numberwithin{equation}{section}
\renewcommand{\appendix}{\par
  \setcounter{section}{0}
  \setcounter{subsection}{0}
  \gdef\thesection{\Alph{section}}
}
\definecolor{darkgreen}{RGB}{1, 100, 32}
\definecolor{lightgreen}{rgb}{0.465, 0.933, 0.565}
\title{Modeling and Replication of the Prepayment Option of Mortgages including Behavioral Uncertainty\texorpdfstring{\footnote[2]{The views expressed in this paper are the personal views of the authors and do not necessarily reflect the views or policies of their current or past employers. The authors have no competing interests.}}{}}
\author[1]{Leonardo Perotti\texorpdfstring{\corref{cor1}}{}}
\ead{L.Perotti@uu.nl}
\author[1,2]{Lech A.~Grzelak}
\ead{L.A.Grzelak@uu.nl}
\author[1]{Cornelis W.~Oosterlee}
\ead{C.W.Oosterlee@uu.nl}
\address[1]{Mathematical Institute, Utrecht University, Utrecht, the Netherlands}
\address[2]{Financial Engineering, Rabobank, Utrecht, the Netherlands}
\date{\today}
\begin{document}

\begin{abstract}
    \noindent Prepayment risk embedded in fixed-rate mortgages forms a significant fraction of a financial institution's exposure, and it receives particular attention because of the magnitude of the underlying market. The \emph{embedded prepayment option} (EPO) bears the same interest rate risk as an exotic interest rate swap (IRS) with a suitable stochastic notional. We investigate the effect of relaxing the assumption of a deterministic relationship between the market interest rate incentive and the prepayment rate \citep[see, e.g.,][]{casamassima2022pricing}. A non-hedgeable risk factor is modeled to capture the uncertainty in mortgage owners' behavior, leading to an incomplete market. 
    As claimed in \citep{bissiri2014modeling}, we prove under natural assumptions that including behavioral uncertainty reduces the exposure's value.
    We statically replicate the exposure resulting from the EPO with IRSs and swaptions, and we show that a replication based on swaps solely cannot easily control the right tail of the exposure distribution, while including swaptions enables that. The replication framework is flexible and focuses on different regions in the exposure distribution. 
    Since a non-hedgeable risk factor entails the existence of multiple equivalent martingale measures, pricing and optimal replication are not unique. We investigate the effect of a market price of risk misspecification and we provide a methodology to generate robust hedging strategies. Such strategies, obtained as solutions to a saddle-point problem, allow us to bound the exposure against a misspecification of the pricing measure.
\end{abstract}

\begin{keyword}
  Prepayment risk \sep fixed-rate mortgage \sep behavioral uncertainty \sep incomplete economy \sep non-unique pricing measure \sep robust exposure replication \sep saddle-point problem.
\end{keyword}

\maketitle

\section{Introduction}
\label{sec: Intro}

From a risk management perspective, financial institutions have no particular interest in issuing fixed-interest rate mortgages. In fact, since a bank's funding is indexed to a reference floating rate, exposure to fixed rates gives rise to \emph{interest rate risk}. 
A bank can, in principle, completely remove all the interest rate risk by investing in a suitable combination of \emph{vanilla} interest rate swaps (IRSs). 
However, any variation from contractual cash flow payments results in a mismatch between the exposure of the mortgage portfolio and the portfolio of IRSs -- in the opposite direction. The phenomenon of \emph{mortgage prepayment} is one reason for such a mismatch. By the term prepayment, we mean all the mortgage notional repayments that occur and are not contractual, hence they are unknown a priori. The focus is on \emph{penalty-free} prepayments, i.e. prepayments that are not contractually embedded with the payment of a penalty to compensate the financial institution for the expected loss caused by the prepayment event. 
The risk connected to these occurrences is called in a broad sense \emph{prepayment risk}. 

Penalty-free prepayment allowances depend on contractual specifications and generally are governed by diverse laws and regulations, contingent upon specific regions or countries. For instance, in the US market, mortgage owners possess the right to prepay any desired amount without incurring penalties. In Canada, the most common threshold for penalty-free prepayment is paying 20\% of the initial notional per year, while in Spain -- even if fixed-rate mortgages are not popular -- the law sets the threshold to 10\% \citep{green2014introduction}. The Dutch market -- where this study is conducted -- is one of the major mortgage markets in the world (relative to their GDP). Most contracts have a medium-term fixing period, usually ranging between 5 and 10 years \citep{alink2002mortgage}, and the penalty-free yearly thresholds have franchises commonly set at 10\% or 20\% of the initial outstanding debt. Specific circumstances, like relocation, permit unrestricted prepayment without penalties.
Independently of the reason for prepayment, the magnitude of the Dutch mortgage market exposure is non-negligible.
Hence, a precise evaluation and assessment of the risk connected with the \emph{embedded prepayment option}\footnote{By ``embedded option'' we want to underlie that such \emph{optionality} is not explicitly written but arises from standard fixed-rate mortgage contractual features.} (EPO) in fixed-rate mortgages is crucial.

The prepayment risk encompasses two primary facets: liquidity risk, linked to mortgage funding costs, and interest rate risk, which is typically mitigated through (Delta-)hedging. We will focus on the assessment and static hedging of the latter. 
Fixed interest rates are coupled with IRSs to swap fixed for floating inflows. In particular, such a coupling involves payer (amortizing) IRSs featuring contractual characteristics akin to the mortgage portfolio, regarding the notional amortization scheme, fixed rate, payment frequency, and tenor. As a consequence, the net value of the acquired position remains virtually unaffected by underlying interest rate fluctuations, as the position is \emph{Delta-neutral}. However, a prepayment event introduces a disparity between the expected cash flows from the mortgage owner and the cash flows from the IRSs used for the replication of the original exposure. Consequently, the mortgage issuer's position becomes sensitive to the direction of the underlying market risk factor, thereby exposing it to interest rate risk.

The conventional academic approach employed to model prepayment comprehends two main classes: \emph{financially rational models} and \emph{exogenous models}. Financially rational models assume prepayment is only affected by market risk factors and actual prepayment only occurs when the decision of prepaying bears higher value -- in expectation -- than continuing without prepayment. For instance, \citep{kuijpers2007optimal} price prepayment for interest-only mortgages in the Dutch market as a multi-exercise Bermudan-type option. 
It is worth noticing that it emerges clear from the data that people do not act rationally when a prepayment decision is taken. 

Extensive empirical studies have been done to investigate the most relevant features explaining the prepayment phenomenon \citep[see, among others,][]{alink2002mortgage,charlier2003prepayment,kalotay2004option,hoda2007implementation,lin2010determinants,hassink2011importance,caspari2018valuation}. It has been shown that prepayment is subject to time effects (e.g. people in certain age buckets are more likely to prepay than people in other buckets), seasonality (e.g., due to taxation benefits, December and January show high rate of prepayment), and other sector-specific features (e.g., prepayment for relocation is affected by the housing market activity). The combination of these \emph{non-financial} features leads to a non-(financially) rational exercise of the EPO, which justifies the development of exogenous models. 

Nonetheless, all empirical studies agree on one fact: prepayment behavior of mortgage owners is mainly driven by an \emph{interest rate incentive} (RI), commonly represented as a spread or ratio between the contractual and the prevailing mortgage rate. Such a relationship is delineated through a monotonic function, often a \emph{sigmoid}, where the steepness characterizes the client's response rate to an enticing rate incentive. Distinct functions represent varying degrees of rationality exhibited by mortgage owners. The two extreme cases are the completely unaware owners and the fully rational owners (displaying a step function ranging from 0 to a maximum prepayment amount). At the portfolio level, intermediate functions encapsulate the ``average'' level of rationality among the mortgage owners in the portfolio. These functions are estimated using historical data and subsequently employed for valuation and hedging purposes. We refer, among others, to \citep{sherris1993pricing,hoda2007implementation,castagna2013measuring,casamassima2022pricing}.

However, this approach presents two primary limitations. Firstly, reliable hedging outcomes can only be achieved when the historical data adequately represents the mortgage owner's response to the rate incentive. Secondly, the approach fails to account for the random nature of prepayment events, namely the ``behavioral risk'' of \citep{bissiri2014modeling}.\footnote{For \citep{bissiri2014modeling}, behavioral risk includes all the decisions that are not driven by a fully financially rational criterion.} 
The objective of this paper is to investigate the impact of a stochastic factor driving the prepayment effect. By introducing a random factor, independent of the market risk factor, a robust representation of the prepayment event can be attained. This stochastic factor serves as a model to quantify the intrinsic \emph{behavioral uncertainty} inherent in prepayment, rendering the resulting model flexible and less reliant on historical data. Furthermore, a hedging strategy based on this model exhibits robustness.

From a risk management perspective, the problem of replicating a target exposure can be recast as an optimization task where the objective function is a metric of the ``path-wise distance'' of the target exposure (in our case the EPO value) to a certain portfolio of hedging/replicating instruments. By path-wise distance, we mean the difference between the exposure and the replicating portfolio values per market scenario. In the financial jargon, scenarios are often referred to as ``paths.'' The path-wise distance is observed over a time window of interest.
This approach, in contrast to conventional Delta-hedging techniques, offers the advantage of a more targeted focus on specific regions of the exposure distribution. By defining a suitable metric of interest, we can tailor the hedging strategy to closely replicate the behavior of the target exposure across different market scenarios and time horizons.
For instance, a trader in a financial institution might be interested in hedging the ``tail-risk'' rising from some rare, extreme movement in the market while being indifferent towards many small movements leading to limited exposure. Even more, she may be worried about the movements of her book's PnL in a specific direction.
Focus on the exposure tails is difficult to enforce within the framework of Delta-hedging but can be imposed by considering quantities such as the \emph{expected shortfall}.

\subsection{Contribution}
The goals of this research are twofold. Firstly, a relaxation of the assumption of deterministic mapping from a market risk factor to the realized prepayment is achieved by introducing a stochastic non-hedgeable risk factor for prepayment whose estimation is based on the real-world prepayment. We show theoretically (for a special case) and with numerical experiments that our model choice leads to low cost of prepayment.
Secondly, the pricing model is the input in a replication task via IRSs and swaptions. According to the assumptions on the equivalent martingale measure employed for pricing, two formulations of the (static) hedging problem are proposed, resulting in replicating strategies susceptible to model misspecification. When the pricing measure is assumed to be known, we can compute the optimal replicating strategy. Conversely, when such assumption is relaxed a robust optimal strategy is obtained, that guarantees a satisfactory performance against a misspecification of the pricing measure. The replication is based on path-wise matching of the EPO value. The approach is flexible and the inclusion of features of interest (such as focusing on a certain region of the exposure distribution) is straightforward by directly modifying the objective of the optimization.

The remainder of the paper is organized as follows. \Cref{sec: MortgagePrepayment} fixes the notation and introduces the background about fixed-rate mortgage prepayment. In \Cref{sec: PrepaymentModel} is presented the prepayment model. \Cref{ssec: PerturbedRIFunction,ssec: RiskNeutralEvaluationEPO,ssec: LowPriceEPO} focus on the inclusion and effect of the stochastic behavioral risk factor in the pricing model; while \Cref{ssec: HedgingPrepayment} is dedicated to the development of the static replication strategy. \Cref{sec: NumericalPricingReplication} presents the numerical methods used for pricing the EPO (\Cref{ssec: NumericalPricing}) and for its replication using market tradable instruments (\Cref{ssec: NumericalReplication}). \Cref{sec: NumericalResults} provides a detailed description of the conducted numerical experiments for what concern valuation (\Cref{sssec: CheapPriceEPO}) and replication (\Cref{ssec: ExposureReplicationSimple,ssec: ExposureReplicationRobust}) of the EPO. Finally, \Cref{sec: Conclusion} furnishes a comprehensive conclusion, summarizing the key findings.

\section{Mortgage prepayment}
\label{sec: MortgagePrepayment}

The current section provides the relevant definitions and notation regarding fixed-rate mortgages and the phenomenon of prepayment.
In the following, we will use the terms \emph{issuer} and \emph{owner} to indicate the financial institution (often a bank) that issues the mortgage and its counterparty, respectively.

\subsection{Fixed-rate mortgages}
\label{ssec: FixedRateMortgages}

From a contractual perspective, a fixed-rate mortgage is identified by a set of constituting features that determine the notional repayment and the interest payment schedules. 
\begin{dfn}[Mortgage features]
\label{def: MortgageSpecs}
Given the time horizon $\T=[0,T]$, where $T$ is the maximum observation time, let us define the mortgage \emph{issue date} $t_0\in\T$ and the mortgage \emph{payment dates}:
\begin{equation}
\label{eq: PaymentDates}
    \T_{\tt{p}}=\{t_1,\dots,t_n\}\in\T,\quad t_0 <t_1 <t_2<\dots <t_n,
\end{equation}
for $n>0$, the number of payments. We denote by $K\in\R$ the \emph{annualized fixed rate} used for the computation of the interest payments. We define the \emph{contractual notional} $N_{\tt{c}}(t)$, $t\in\T$, i.e. the amount of outstanding notional at time $t$, and we indicate $N_{\tt{c},0}=N_{\tt{c}}(t_0)$. We assume $N_{\tt{c}}(t)$ right-continuous, and we have $N_{\tt{c}}(t)=0$ for $t\geq t_n$.
\end{dfn}

\begin{rem}[Amortization scheme]
\label{rem: AmortizationSchemes}
    
The contractual notional may follow different \emph{amortization schemes}, represented by $N_{\tt{c}}(t)$, $t\in\T$, in the form of different criteria for the full repayment of the total mortgage debt, $N_{\tt{c},0}$. Common mortgage contracts with specific amortization schemes include interest-only (or bullet) mortgages, linear mortgages, and annuity mortgages \citep[see, e.g.,][]{green2014introduction}. 
The first assumes no notional repayments over the life of the mortgage, and a unique payment of $N_{\tt{c},0}$ at final time $t_n$; the second considers a linear amortization schedule over the payment dates $\T_{\tt{p}}$; the third is specified to guarantee a constant payment over the life of the mortgage comprising both interest payments and notional repayments. In either case, the amortization schemes described are piece-wise constant with discontinuities (at most) at the payment dates $\T_{\tt{p}}$.
\end{rem}
\begin{figure}[b]%%
    \centering
    \subfloat[\centering]{{\includegraphics[width=7.5cm]{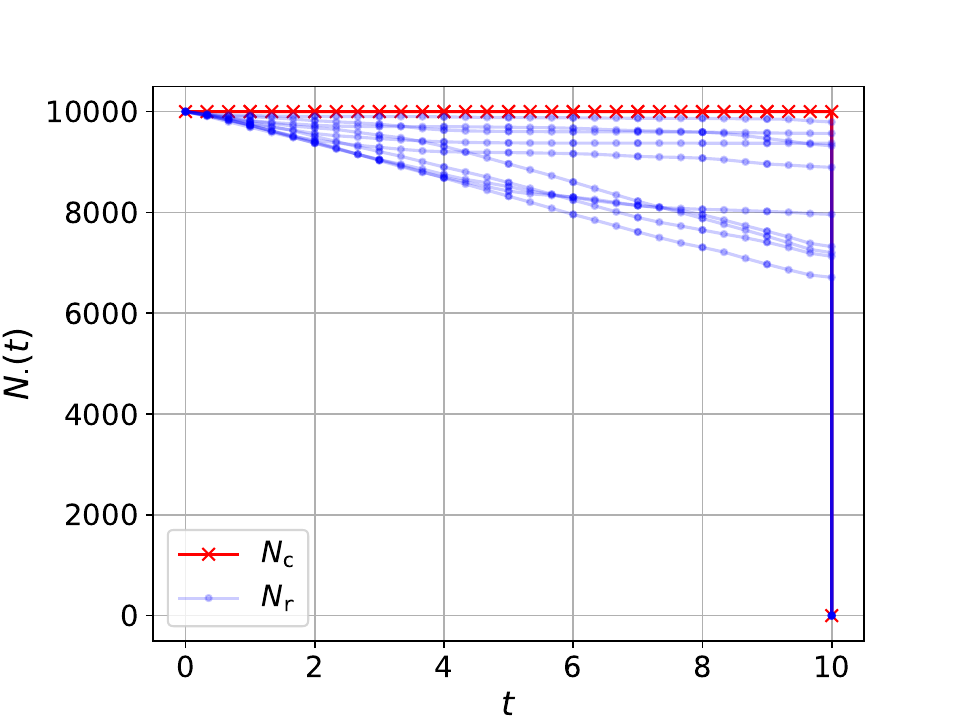} }}%
    ~\hspace{-.5cm}
    \subfloat[\centering]{{\includegraphics[width=7.5cm]{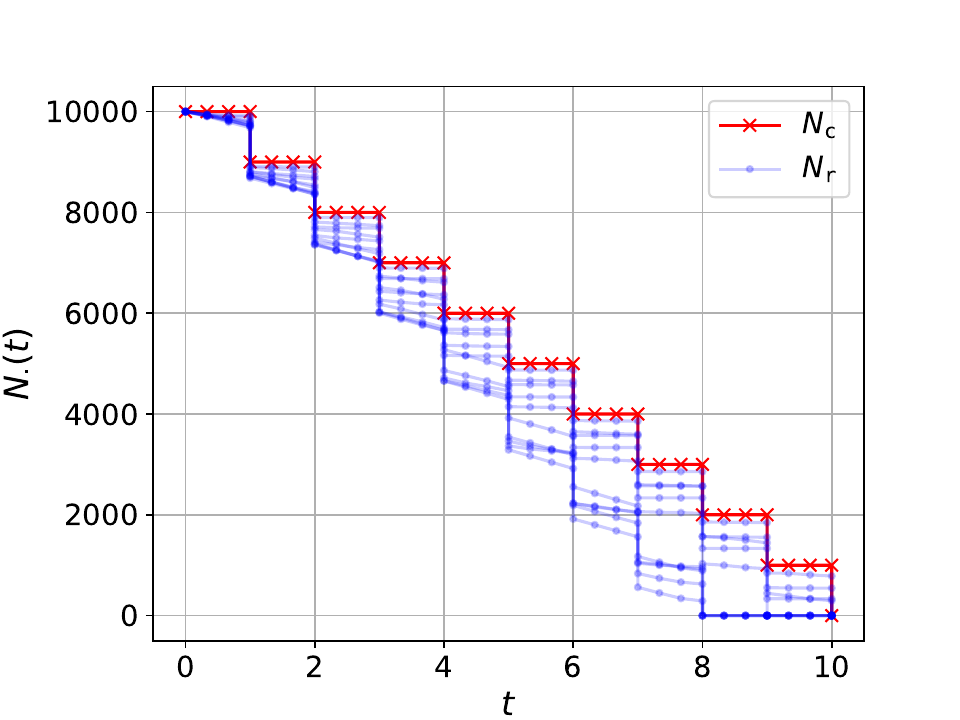} }}
    \caption{Comparison between contractual and realized notional (for different scenarios $\omega$) (a) Interest-only (bullet) mortgage. (b) Linear amortization mortgage.}
    \label{fig: NotionalProfile}%
\end{figure}

According to different regulations specific to different countries, mortgage contracts contain features that allow deviations from the contractual notional $N_{\tt{c}}(t)$. Such a mismatch between the expected amortization profile and the actual amortization schedule \emph{realized} -- on which the computation of the interest payments is based -- is due to the \emph{prepayment}. With the term ``prepayment'' we refer to all notional repayments that occur but are not scheduled in the contractual amortization scheme.
\begin{dfn}[Realized notional]
\label{def: RealisedNotional}
    Given the space of events $\Omega$ and event $\omega\in\Omega$, we indicate the \emph{realized notional under scenario $\omega$} with $N_{\tt{r}}(t,\omega)$, for $t\in\T$. 
    We use the standard notation $N_{\tt{r}}(t)$, dropping the explicit dependence on the scenario $\omega$.
\end{dfn}
The illustration of contractual and realized notional schemes, $N_{\tt{c}}$ and $N_{\tt{r}}$ for a 10-year interest-only (bullet) mortgage and a 10-year linear amortization mortgage, both with yearly payment frequency are given in \Cref{fig: NotionalProfile}. In red is presented the deterministic contractual notional, while in blue a few paths of the stochastic realized notional are shown. In \Cref{fig: NotionalProfile}b, for some paths, we visualize the \emph{early-termination} feature caused by the prepayment. Such a development is possible for any amortization scheme, but it is more likely and pronounced in contracts such as linear or annuity mortgages, where the amortization scheme includes notional repayments over the life of the contract.
\begin{dfn}[Interest payment]
\label{def: InterestPayment}
   Given a payment date $t_j\in\T_{\tt{p}}$ and a fixed interest rate $K$, the \emph{interest payment} at $t_j$ is defined as:
   \begin{equation}
   \label{eq: InterestPayment}
       I_{\cdot}(t_j;K,N_{\cdot}) = K\int_{t_{j-1}}^{t_j} N_{\cdot}(\tau)\d\tau,
   \end{equation}
   where $N_{\cdot}(t)$ can be any of the notional profiles in Definitions \ref{def: MortgageSpecs} and \ref{def: RealisedNotional} and $I_{\cdot}(t_j)$ is the corresponding interest payment.
\end{dfn}
We observe that, when the amortization schemes from \Cref{rem: AmortizationSchemes} are considered for $N_{\tt{c}}(t)$, then $I_{\tt{c}}(t_j;K,N_{\tt{c}})=KN_{\tt{c}}(t_{j-1})(t_j-t_{j-1})$ for every $t_j\in\T_{\tt{p}}$. This is true for piece-wise constant amortization schemes with discontinuities only at $t_j\in\T_{\tt{p}}$.
\begin{dfn}[Fixed-rate mortgage]
    Given the specification in \Cref{def: MortgageSpecs} and a piece-wise constant amortization scheme, a \emph{fixed rate mortgage} is a collateralized loan with the following cash flows (from the issuer perspective):
    \begin{equation}
    \label{eq: MortgageCashflows}
        \begin{aligned}
            t_0&:\quad-N_{\tt{c},0},\\
            t_j&:\quad N_{\tt{c}}(t_{j-1})-N_{\tt{c}}(t_{j}) + I_{\tt{c}}(t_j;K,N_{\tt{c}}),\qquad t_j\in\T_{p},
        \end{aligned}
    \end{equation}
    with $I_{\tt{c}}$ as in \Cref{def: InterestPayment}. In case of an interest-only mortgage, \eqref{eq: MortgageCashflows} becomes:
    \begin{equation}
    \label{eq: BulletCashflows}
        \begin{aligned}
            t_0&:\quad-N_{\tt{c},0},\\
            t_j&:\quad K N_{\tt{c},0} \Delta t_j,\qquad \quad t_j\in\T_{p}\backslash \{t_n\},\\
            t_n&: \quad N_{\tt{c},0} (1 + K \Delta t_n),
        \end{aligned}
    \end{equation}
    with $\Delta t_j=t_j-t_{j-1}$.
\end{dfn}

\begin{rem}[Fixed mortgage decomposition]
\label{rem: MortgageDecomposition}
    Every fixed-rate mortgage can be decomposed into a sum of interest-only mortgages. In particular, a fixed rate mortgage with amortization $N_{\tt{c}}$, fixed rate $K$, and payment dates $\T_{\tt{p}}$ is equivalent to the sum of $n$ interest-only mortgages with fixed rate $K$, initial notional $N_{\tt{c}}(t_{j})-N_{\tt{c}}(t_{j-1})$ and end date $t_j$, respectively.
\end{rem}
The above decomposition is convenient since it allows us to extend obvious properties of interest-only mortgages to general fixed-rate mortgages.

\subsection{Funding mechanism}

A financial institution selling a fixed-rate mortgage funds the liquidity by issuing a set of \emph{floating rate notes} (FRNs), i.e. coupon-bearing bonds whose coupon rates are given by a floating reference rate depending on the coupon payment frequency.
\begin{dfn}[FRN]
\label{def: FRN}
    An FRN written on a (constant) notional $\bar{N}$, with starting date $t_0$ and end date $t_n$ is a contract where the writer receives the notional at time $t_0$ and pays it back at time $t_n$. Over the life of the contract, the writer pays interest based on a floating reference rate. In particular, the amount of interest paid at time $t_j$ is given by:
    \begin{equation*}
        I^{fl}(t_j;L,\bar{N}) = F(t_{j-1};t_{j-1},t_j) \bar{N} \Delta t_j,\qquad  j=1,\dots,n,
    \end{equation*}
    where the forward reference rate, $F(t;t_{j-1},t_j)$, with fixing time $t_{j-1}$ for the period $(t_{j-1},t_j)$ is given by:
    \begin{equation*}
            F(t;t_{j-1},t_j)=\frac{P(t;t_{j-1})-P(t;t_{j})}{\Delta t_j P(t;t_{j})},\qquad t\leq t_{j-1},
    \end{equation*}
    and $P(t;s)$ indicates the price at time $t$ of a zero coupon bond maturing at $s\geq t$. From the issuer perspective, the cash flows of the FRN are given by:
    \begin{equation}
    \label{eq: FRNCashflows}
        \begin{aligned}
            t_0&:\quad-\bar{N},\\
            t_j&:\quad I^{fl}(t_j;F,\bar{N}),\qquad \quad t_j\in\T_{p}\backslash \{t_n\},\\
            t_n&: \quad \bar{N} + I^{fl}(t_n;F,\bar{N}).
        \end{aligned}
    \end{equation}
\end{dfn}
By coupling a fixed-rate mortgage with suitable FRNs, it is possible to exchange all notional payments for a set of floating-rate interest payments with the same amortization schedule as the mortgage. For an interest-only mortgage, it is sufficient to issue an FRN with $\bar{N}=N_{\tt{c},0}$, as illustrated in \Cref{fig: FundingMechanism} (left). The extension to general mortgages follows from \Cref{rem: MortgageDecomposition}.
When the issuer perspective is taken, the resulting portfolio corresponds to a \emph{receiver amortizing interest rate swap} (IRS) with amortizing notional given by the mortgage contractual notional, $N_{\tt{c}}$. The special case of an interest-only mortgage is illustrated in \Cref{fig: FundingMechanism}, where the resulting IRS is a vanilla instrument (no amortization). 
\begin{rem}
In a frictionless market, an FRN carries no interest-rate risk.\footnote{As stated in \citep{brigo2006interest}, in a single curve framework, ``a floating-rate note always trades at par'' at any payment dates.} Hence, the described ``funding mechanism'' suggests that a fixed rate mortgage is equivalent to a suitable (receiver amortizing) IRS in terms of the interest rate risk. This also indicates that swaps are the natural market instruments for replicating and hedging the exposure generated by fixed-rate mortgages.    
\end{rem}

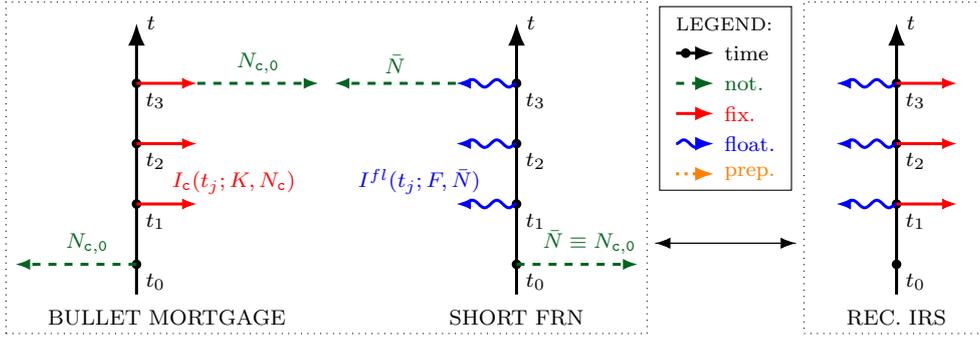
\begin{figure}[t]
    \centering
    \begin{tikzpicture}

\node[align=center] at (0.4, 0.1) {\footnotesize BULLET MORTGAGE};

% black dashed arrow
\draw[-{Latex[length=3mm]}, black, line width=0.4mm] (0,0.5*0.8) -- (0,5*0.8) node[right, black] {\footnotesize $t$};

% black dots at the start of each horizontal arrow
\fill[black] (0.0,1*0.8) circle (1.8pt);
\node[anchor=north west] at (0.0,1*0.8) {\footnotesize $t_0$};
\fill[black] (0.0,2*0.8) circle (1.8pt);
\node[anchor=north west] at (0.0,2*0.8) {\footnotesize $t_1$};
\fill[black] (0.0,3*0.8) circle (1.8pt);
\node[anchor=north west] at (0.0,3*0.8) {\footnotesize $t_2$};    
\fill[black] (0.0,4*0.8) circle (1.8pt);
\node[anchor=north west] at (0.0,4*0.8) {\footnotesize $t_3$};

% green arrows with text above
\draw[-{Latex[length=2mm]}, darkgreen, dashed, line width=0.4mm] (0.0,1*0.8) -- (-1.6,1*0.8) node[pos=0.5, above, darkgreen] {\footnotesize $\quad N_{\tt{c},0}$};
\draw[-{Latex[length=2mm]}, darkgreen, dashed, line width=0.4mm] (0.8,4*0.8) -- (2.4,4*0.8) node[pos=0.5, above, darkgreen] {\footnotesize $N_{\tt{c},0}$};

% red arrows with text above
\draw[-{Latex[length=2mm]}, red, line width=0.4mm] (0.0,2*0.8) -- (0.8,2*0.8) node[pos=1.6, above, red] {\footnotesize $I_{\tt{c}}(t_j;K,N_{\tt{c}})$};
\draw[-{Latex[length=2mm]}, red, line width=0.4mm] (0.0,3*0.8) -- (0.8,3*0.8) node[above, black] {};
\draw[-{Latex[length=2mm]}, red, line width=0.4mm] (0.0,4*0.8) -- (0.8,4*0.8) node[above, black] {};

% Shift the entire graph horizontally to the right and draw again
\begin{scope}[xshift=5cm]
\node[align=center] at (0.0, 0.1) {\footnotesize SHORT FRN};
% black dashed arrow
\draw[-{Latex[length=3mm]}, black, line width=0.4mm] (0,0.5*0.8) -- (0,5*0.8) node[right, black] {\footnotesize $t$};
% black dots at the start of each horizontal arrow
\fill[black] (0.0,1*0.8) circle (1.8pt);
\node[anchor=north west] at (0.0,1*0.8) {\footnotesize $t_0$};
\fill[black] (0.0,2*0.8) circle (1.8pt);
\node[anchor=north west] at (0.0,2*0.8) {\footnotesize $t_1$};
\fill[black] (0.0,3*0.8) circle (1.8pt);
\node[anchor=north west] at (0.0,3*0.8) {\footnotesize $t_2$};    
\fill[black] (0.0,4*0.8) circle (1.8pt);
\node[anchor=north west] at (0.0,4*0.8) {\footnotesize $t_3$};

% green arrows with text above
\draw[-{Latex[length=2mm]}, darkgreen, dashed, line width=0.4mm] (0.0,1*0.8) -- (1.6,1*0.8) node[pos=0.6, above, darkgreen] {\footnotesize $\bar{N}\equiv N_{\tt{c},0}$};
\draw[-{Latex[length=2mm]}, darkgreen, dashed, line width=0.4mm] (-0.8,4*0.8) -- (-2.4,4*0.8) node[pos=0.5, above, darkgreen] {\footnotesize $\bar{N}$};

% blue curly arrows with text above
\draw[-{Latex[length=2mm]}, blue, decorate, decoration={snake, amplitude=0.5mm, segment length=3.2mm}, line width=0.4mm] (0.0,2*0.8) -- (-0.8,2*0.8) node[pos=1.6, above, blue] {\footnotesize $I^{fl}(t_j;F,\bar{N})$};
\draw[-{Latex[length=2mm]}, blue, decorate, decoration={snake, amplitude=0.5mm, segment length=3.2mm}, line width=0.4mm] (0.0,3*0.8) -- (-0.8,3*0.8) node[above, blue] {};
\draw[-{Latex[length=2mm]}, blue, decorate, decoration={snake, amplitude=0.5mm, segment length=3.2mm}, line width=0.4mm] (0.0,4*0.8) -- (-0.8,4*0.8) node[above, black] {};
\end{scope}

% Create a node to encompass both graphs
\node[fit={(-1.6,0.0*0.8) (6.6,5.2*0.8)}, draw, dotted] (box1) {};

% Third vertical dashed black arrow on the right
\begin{scope}[xshift=10cm]
\node[align=center] at (0.0, 0.1) {\footnotesize REC. IRS};
% black dashed arrow
\draw[-{Latex[length=3mm]}, black, line width=0.4mm] (0,0.5*0.8) -- (0,5*0.8) node[right, black] {\footnotesize $t$};
% black dots at the start of each horizontal arrow
\fill[black] (0.0,1*0.8) circle (1.8pt);
\node[anchor=north west] at (0.0,1*0.8) {\footnotesize $t_0$};
\fill[black] (0.0,2*0.8) circle (1.8pt);
\node[anchor=north west] at (0.0,2*0.8) {\footnotesize $t_1$};
\fill[black] (0.0,3*0.8) circle (1.8pt);
\node[anchor=north west] at (0.0,3*0.8) {\footnotesize $t_2$};    
\fill[black] (0.0,4*0.8) circle (1.8pt);
\node[anchor=north west] at (0.0,4*0.8) {\footnotesize $t_3$};

% blue curly arrows with text above
\draw[-{Latex[length=2mm]}, blue, decorate, decoration={snake, amplitude=0.5mm, segment length=3.2mm}, line width=0.4mm] (0.0,2*0.8) -- (-0.8,2*0.8) node[right, blue] {};
\draw[-{Latex[length=2mm]}, blue, decorate, decoration={snake, amplitude=0.5mm, segment length=3.2mm}, line width=0.4mm] (0.0,3*0.8) -- (-0.8,3*0.8) node[left, blue] {};
\draw[-{Latex[length=2mm]}, blue, decorate, decoration={snake, amplitude=0.5mm, segment length=3.2mm}, line width=0.4mm] (0.0,4*0.8) -- (-0.8,4*0.8) node[above, black] {\vphantom{$K N_{\tt{c},0} \Delta t$}};
% red arrows with text above
\draw[-{Latex[length=2mm]}, red, line width=0.4mm] (0.0,2*0.8) -- (0.8,2*0.8) node[right, red] {};
\draw[-{Latex[length=2mm]}, red, line width=0.4mm] (0.0,3*0.8) -- (0.8,3*0.8) node[above, black] {};
\draw[-{Latex[length=2mm]}, red, line width=0.4mm] (0.0,4*0.8) -- (0.8,4*0.8) node[above, black] {};
\end{scope}

\node[fit={(8.9,0.0*0.8) (11.1,5.2*0.8)}, draw, dotted] (box2) {};

% Double-pointed arrow between the two boxes
\draw[-{Latex[length=2mm]}, black] ([xshift=1mm, yshift=-10mm]box1.east) -- ([xshift=-1mm, yshift=-10mm]box2.west);
\draw[-{Latex[length=2mm]}, black] ([xshift=-1mm, yshift=-10mm]box2.west) -- ([xshift=1mm, yshift=-10mm]box1.east);

% Legend
\begin{scope}[xshift=7.1cm, yshift=2.cm]
% \node[draw, line width=0.1mm, fit={(0,0) (2.5,1.5)}] (legend) {};
\draw[-{Latex[length=2.5mm]}, black, line width=0.4mm] (0.,1.6) -- (0.5,1.6) node[right, black] {\footnotesize time};
\fill[black] (0.125,1.6) circle (1.8pt);
\draw[-{Latex[length=2.5mm]}, darkgreen, dashed, line width=0.4mm] (0.,1.2) -- (0.5,1.2) node[right, darkgreen] {\footnotesize not.};
\draw[-{Latex[length=2.5mm]}, red, line width=0.4mm] (0.,0.8) -- (0.5,0.8) node[right, red] {\footnotesize fix.};
\draw[-{Latex[length=2.5mm]}, blue, decorate, decoration={snake, amplitude=0.5mm, segment length=3.2mm}, line width=0.4mm] (0.,0.4) -- (0.5,0.4) node[right, blue] {\footnotesize float.};
\draw[-{Latex[length=2.5mm]}, orange, dotted, line width=0.4mm] (0.0,0.0) -- (0.5,0.0) node[right, orange] {\footnotesize prep.};
\node[fit={(0.0, 1.9) (0.5, 1.9)}] {\scriptsize LEGEND:};
\end{scope}
\node[fit={(7,1.9) (8.5,4.1)}, draw] (box3) {};

\end{tikzpicture}
    \caption{\footnotesize Equivalence between an IRS (right) and the sum of a fixed rate interest-only mortgage and an FRN (left). Dashed green arrows: contractual notional payments. Solid red arrows: interest payments from a fixed rate. Curly blue arrows: interest payments from a floating rate.}
    \label{fig: FundingMechanism}
\end{figure}

Now, we take the perspective of a trader with an open position composed of ``prepayable'' fixed-rate mortgages. By monitoring her book and buying FRNs in the market, the equivalence presented above can still be achieved, however, the trader cannot purchase in advance the FRNs needed to fulfill the equivalence. Such operation has to be done for each realized market scenario, over the life of the mortgage. The following example is illustrative for the underlying idea.

\begin{exm}[Funding mechanism and ``prepayment option'']
\label{ex: FundingMechanism}
    Let us consider a simple fixed-rate mortgage with only three payment dates at $t_j=j$ years, $j=1,2,3$. We assume the fixed rate is $K$, and the notional amortization schedule is linear, starting from 1, i.e. the contractual profile is equal to:
    \begin{equation*}
        N_{\tt{c}}(t_0)=1,\quad N_{\tt{c}}(t_1)=\frac{2}{3},\quad N_{\tt{c}}(t_2)=\frac{1}{3},\quad N_{\tt{c}}(t_3)=0.
    \end{equation*}
    At the initial time, $t_0=0$, the financial institution funds the mortgage with three FRNs each one with notional $\frac{1}{3}$ and with maturity $t_j$, $j=1,2,3$, respectively (see \Cref{fig: FundingMechanismPrepayment} up-left, using the same legend as in \Cref{fig: FundingMechanism}). The financial institution's net position is a receiver IRS with linear amortizing notional $N_{\tt{c}}$, and fixed rate $K$ (see \Cref{fig: FundingMechanismPrepayment} up-right.

    Let us consider a simplified reality, where only two possible events might occur, namely $\Omega=\{\omega_0,\omega_1\}$. For the sake of the example, we assume $\omega_0$ corresponds to the event ``no prepayment occurs over the life of the contract,'' while $\omega_1$ is the event ``one prepayment of $\frac{1}{2}$ occurs at $t_1$.''
    
    Using the notation of \Cref{def: RealisedNotional}, the realized notional in the two possible scenarios is given by:
    \begin{equation*}
    \begin{aligned}
        \omega_0&:\quad N_{\tt{r}}(t_0,\omega_0)=1,\quad N_{\tt{r}}(t_1,\omega_0)=\frac{2}{3},\quad N_{\tt{r}}(t_2,\omega_0)=\frac{1}{3},\quad N_{\tt{r}}(t_3,\omega_0)=0,\\
        \omega_1&:\quad N_{\tt{r}}(t_0,\omega_1)=1,\quad N_{\tt{r}}(t_1,\omega_1)=\frac{2}{3}-\frac{1}{2}=\frac{1}{6},\quad N_{\tt{r}}(t_2,\omega_1)=0,\quad N_{\tt{r}}(t_3,\omega_1)=0.
    \end{aligned}
    \end{equation*}
    Under $\omega_1$, a trader observing the prepayment at time $t_1$ uses the unexpected notional inflow of $\frac{1}{2}$, to buy two FRNs maturing at $t_2$ and $t_3$ with notional $\frac{1}{6}$ and $\frac{1}{3}$, respectively (see \Cref{fig: FundingMechanism} down-left). By doing so, the financial institution's net position, under $\omega_1$, is an amortizing receiver IRS with \emph{different amortization scheme} from the linear version based on the contractual notional $N_{\tt{c}}$ (compare \Cref{fig: FundingMechanism} up-right and down-right).
\end{exm}
\begin{figure}[t]
    \centering
    \input{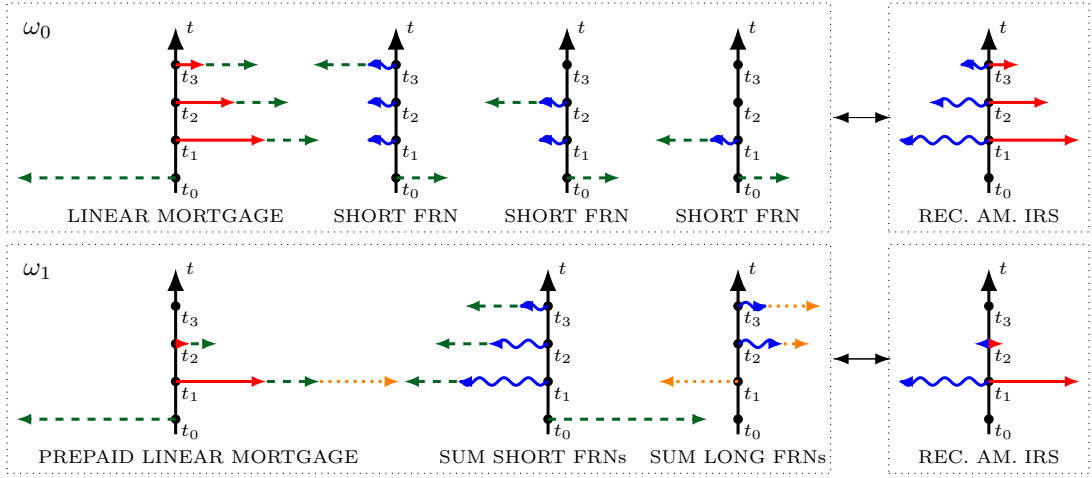}
    \caption{\footnotesize Effect of prepayment on the funding mechanism. Illustration of \Cref{ex: FundingMechanism}. Up: no prepayment, i.e. $\omega_0$. Down: prepayment, i.e. $\omega_1$. Dashed green arrows: contractual notional payments. Solid red arrows: interest payments from a fixed rate. Curly blue arrows: interest payments from a floating rate. Dotted orange arrows: notional prepayments.}
    \label{fig: FundingMechanismPrepayment}
\end{figure}
Observe that the strategy described in \Cref{ex: FundingMechanism} holds, without loss of generality, for any realized scenario bearing any prepayment behavior (this is the role of the trader who monitors her position over time), and it is independent of the choice of event space $\Omega$. In general, the equivalence between mortgages (combined with FRNs) and ``IRSs'' still holds when prepayment is considered, but a fundamental difference arises.
\begin{rem}[Stochastic notional IRS]
\label{rem: StochasticIRS}
When prepayment is contemplated, the funding equivalence only holds for each realized scenario (see \Cref{ex: FundingMechanism}), then the IRS notional is not deterministic.
Hence, the ``vanilla'' IRS used in the case of no prepayment has to be replaced with an ``exotic'' IRS, with stochastic notional. Observe that the exotic IRS' stochastic notional matches the (stochastic) realized notional $N_{\tt{r}}$ of \Cref{def: RealisedNotional}, as much as the vanilla IRS (deterministic) notional would match $N_{\tt{c}}$ of \Cref{def: MortgageSpecs}, when no prepayment is considered.
\end{rem}

An IRS with stochastic notional generalizes some well-known types of options on IRSs. For example, a European payer swaption can be interpreted as a swap with stochastic notional given by $N(t)=\Bar{N}\1_{\kappa(T)>K}$, for $\Bar{N}$ the contractual notional, $\kappa(T)$ the par swap rate\footnote{The formal definition of $\kappa$ will be provided in \eqref{def: SwapRate}.} at swaption maturity, and $K$ the contractual swap rate. A similar representation, where the stochastic notional is based on the early exercise region, holds for Bermudan swaptions.
Following this intuition, we indicate the contractual right of prepayment as an \emph{embedded\footnote{Since it is not an option explicitly written as such, but is \emph{embedded} in the mortgage contract.} prepayment option} (EPO).

\subsection{Embedded prepayment option}
To isolate the EPO's interest-rate risk, we consider the mismatch between the contractual and the realized mortgage. 
As a consequence of \Cref{rem: StochasticIRS}, the EPO is effectively an exotic IRS with stochastic notional given by the difference between the contractual and the realized mortgage notional.
\begin{dfn}[EPO]
\label{def: PrepaymentOptionPayoff}
    The EPO is an exotic IRS with stochastic notional given by:
    \begin{equation*}
        N(t)=N_{\tt{c}}(t)-N_{\tt{r}}(t), \qquad t\in\T,
    \end{equation*}
    for $N_{\tt{c}}$ and $N_{\tt{r}}$ as given in Definitions \ref{def: MortgageSpecs} and \ref{def: RealisedNotional}. The EPO cash flows read:
    \begin{equation}
    \label{eq: PrepaymentPayoff}
        \CF(t_j)= \big(K-F(t_{j-1};t_{j-1},t_j)\big)\int_{t_{j-1}}^{t_j} N(\tau) \d\tau,\qquad t_j \in\T_{\tt{p}},
    \end{equation}
    with $F(t;t_{j-1},t_j)$, as in \Cref{def: FRN}
\end{dfn}

Observe that \Cref{eq: PrepaymentPayoff} generalizes the payoff of a receiver amortizing IRS, where the notional profile is deterministic and piece-wise constant. Similarly, \Cref{eq: PrepaymentPayoff} generalizes the payoff of a receiver amortizing European swaption; in such case, the stochastic notional is piece-wise constant or zero, according to the exercise policy at maturity. 

To better understand how the prepayment notional $N(t)$ is computed, we introduce another process, $\Lambda(t)$. The stochastic process $\Lambda(t)$ represents the \emph{unconditional instantaneous prepayment rate}, given in terms of the initial contractual notional $N_0=N_{\tt{c}}(0)$. By \emph{unconditional}, we mean that $\Lambda(t)$ is the actual rate of prepayment as long as there is enough outstanding notional to prepay, otherwise, the realized rate is capped at 0, and no further prepayment occurs; the mortgage contract is terminated before the contractual end, $t_n$.

Looking at \Cref{def: RealisedNotional}, the mortgage realized notional can be written in terms of $N_{\tt{c}}$ and $\Lambda$ and reads:
\begin{equation*}
    N_{\tt{r}}(t)=\bigg[N_{\tt{c}}(t) - N_0\int_{t_0}^{t} \Lambda(\tau)\d \tau\bigg]^+,
\end{equation*}
where the operator $[\cdot]^+=\max(0, \cdot)$ is introduced to avoid a negative realized notional. Similarly, the prepayment notional reads:
\begin{equation}
\label{eq: PrepaymentNotional}
\begin{aligned}
    N(t) &= N_{\tt{c}}(t)-N_{\tt{r}}(t)\\
     &= \min\bigg(N_{\tt{c}}(t),N_0\int_{t_0}^{t} \Lambda(\tau)\d \tau\bigg).
\end{aligned}
\end{equation}
From \Cref{eq: PrepaymentNotional}, the source of randomness in the notional $N(t)$ depends on the process $\Lambda(\tau)$, $\tau\leq t$. The next section provides details on its specifications.

\section{Prepayment model and replication}
\label{sec: PrepaymentModel}

\subsection{Stochastic prepayment functional form}
\label{ssec: PerturbedRIFunction}

A standard approach in the literature specifies a suitable deterministic mapping between a set of explanatory variables -- possibly modeled as stochastic processes -- and the rate of prepayment $\Lambda$ \citep[see, e.g.,][]{hayre2003prepayment,casamassima2022pricing,jagannathan2022machine}. 
The main explanatory variable is the \emph{financial rate incentive} (RI), indicated by a process $\varepsilon(t)$, representing how appealing for prepayment the current market level is.
The RI is defined as a function of a (stochastic) market risk factor, say $r(t)$. Notably, a common choice is to model $\varepsilon(t)$ as the spread between the fixed interest rate $K$ of the original mortgage contract and the prevailing (fixed) rate $\kappa(t)$ offered in the market for a new mortgage with the same features and outstanding tenor as the original contract \citep[see][]{perry2001study,hoda2007implementation}, i.e.:
\begin{equation}
\label{eq: RateIncentive}
    \varepsilon(t) = K - \kappa(t) =: g(r(t)),
\end{equation}
for some suitable, deterministic function $g$. We adopt this assumption here, but another used choice for the rate incentive is the ratio between $K$ and $\kappa(t)$ \citep[see, e.g.,][]{richard1989prepayments}. Furthermore, by non-arbitrage arguments, it is easy to show that the fixed rate currently offered in the market for a new mortgage is equal -- in a mathematically fair setting -- to the par rate of a swap with the same notional scheme and frequency as the outstanding notional. 
\begin{dfn}[Generalised swap rate]
\label{def: SwapRate}
    Let $N_{\tt{c}}(t)$, $t\in\T=[t_0,t_n]$, be the deterministic piece-wise constant contractual notional scheme for an IRS with reset dates $\T_{\tt{r}}=\{t_0,\dots,t_{n-1}\}$ and payment dates $\T_{\tt{p}}=\{t_1,\dots,t_{n}\}$. Then, we define the stochastic \emph{swap rate} $\kappa$ as:
    \begin{equation*}
        \kappa(t)=\frac{1}{A(t)}\sum_{\substack{t_j\in\T_{\tt{p}}\\ t_j> t}}N_{\tt{c}}(t_{j-1})\big(P(t;t_{j-1})-P(t;t_j)\big),\qquad t\in\T,
    \end{equation*}
    with $P(t;t_j)$ as in \Cref{def: FRN} and \emph{annuity factor}:
    \begin{equation*}
        A(t)=\sum_{\substack{t_j\in\T_{\tt{p}}\\ t_j> t}}N_{\tt{c}}(t_{j-1})\Delta t_j P(t;t_{j}).
    \end{equation*}
\end{dfn}
Note that when the swap notional, $N_{\tt{c}}$, in \Cref{def: SwapRate} is constant, we end up with the simplified expression for a vanilla swap rate, where most of the terms in the numerator disappear because of a telescopic sum \citep[see, e.g.,][]{brigo2006interest,oosterlee2019mathematical}.

RI $\varepsilon(t)$ is mapped into $\Lambda(t)$ through the following relationship, in our setting,
\begin{equation}
\label{eq: CPRSigmoid}
\begin{aligned}
        \Lambda(t)&={h}_{{RI}}(t, \varepsilon(t);l,u,a,b),\\
    {h_{RI}}(t, x;l,u,a,b)&=l+\frac{u-l}{2}\Big[\tanh{\big(a(x+b)\big)}+1\Big],
\end{aligned}
\end{equation}
with $\varepsilon(t)$ as in \eqref{def: SwapRate}, and the parameters $l$, $u$, $a$ and $b$. The first two parameters control the lower and upper bounds for the prepayment amount, while $a$ and $b$ control the client's reaction rate to the RI.
A monotonically increasing function $h_{RI}$ has been chosen since the mortgage owners are more prone to prepayment the higher the RI is. Furthermore, empirical evidence shows an unelastic (or delayed) reaction to a positive RI, meaning that a continuous function has to be preferred to a step function \citep{casamassima2022pricing} (see \Cref{fig: Sigmoid}a).

\begin{figure}[t!]%
    \centering
    \subfloat[\centering]{{\includegraphics[width=7.15cm]{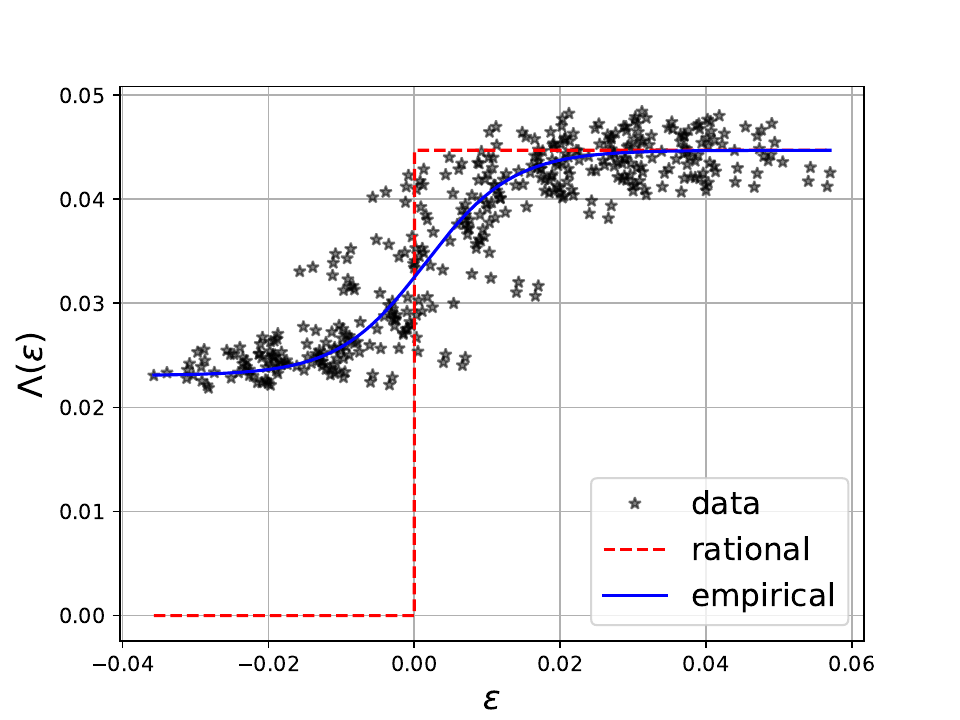} }}%
    ~
    \subfloat[\centering]{{\includegraphics[width=6.5cm]{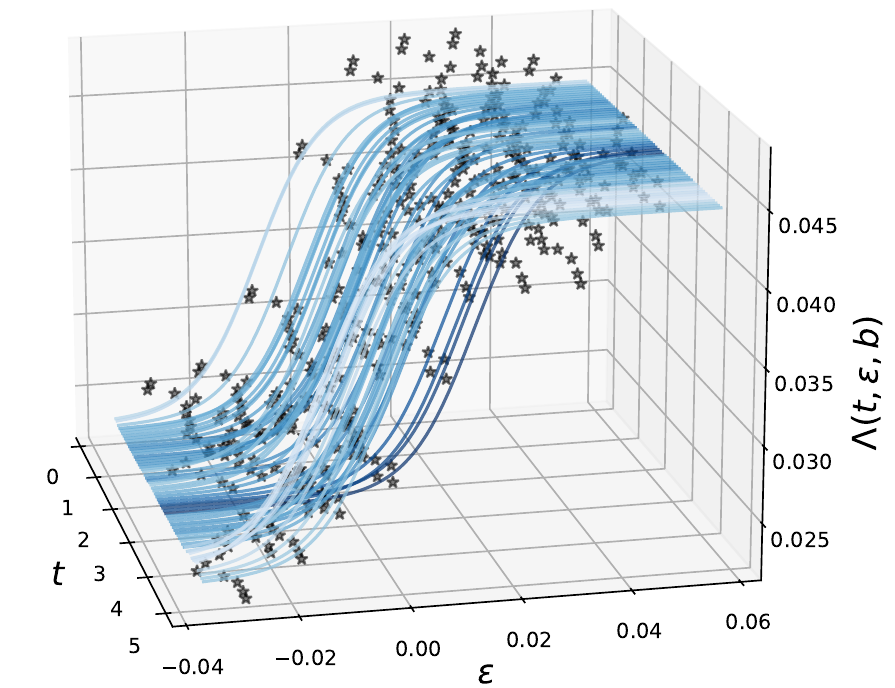} }}\\
    \subfloat[\centering]{{\includegraphics[width=7.cm]{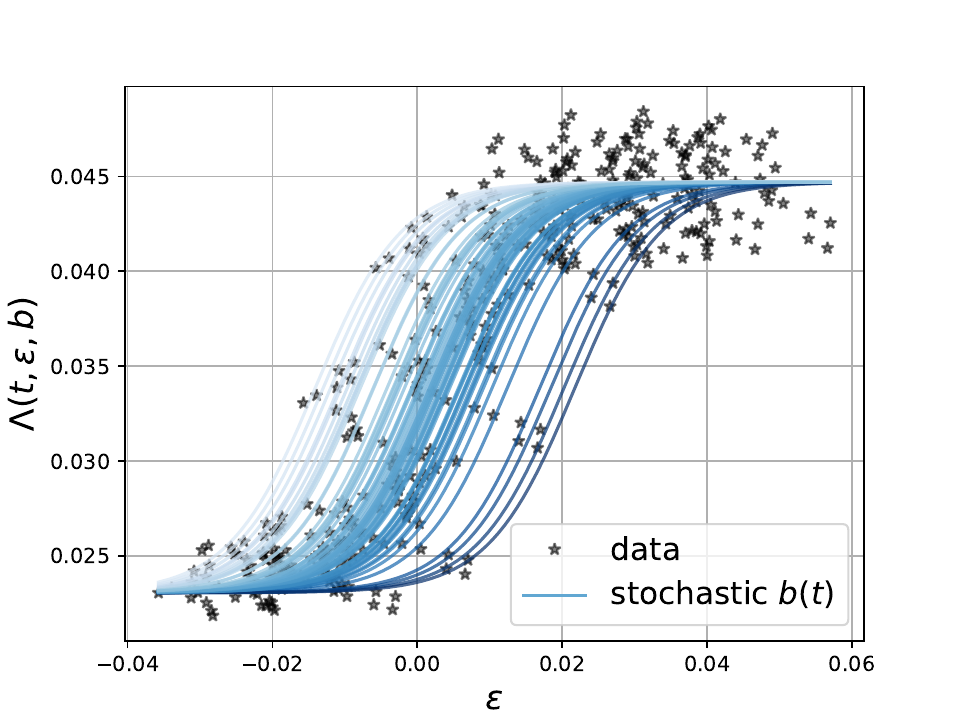} }}%
    ~
    \subfloat[\centering]{{\includegraphics[width=7.1cm]{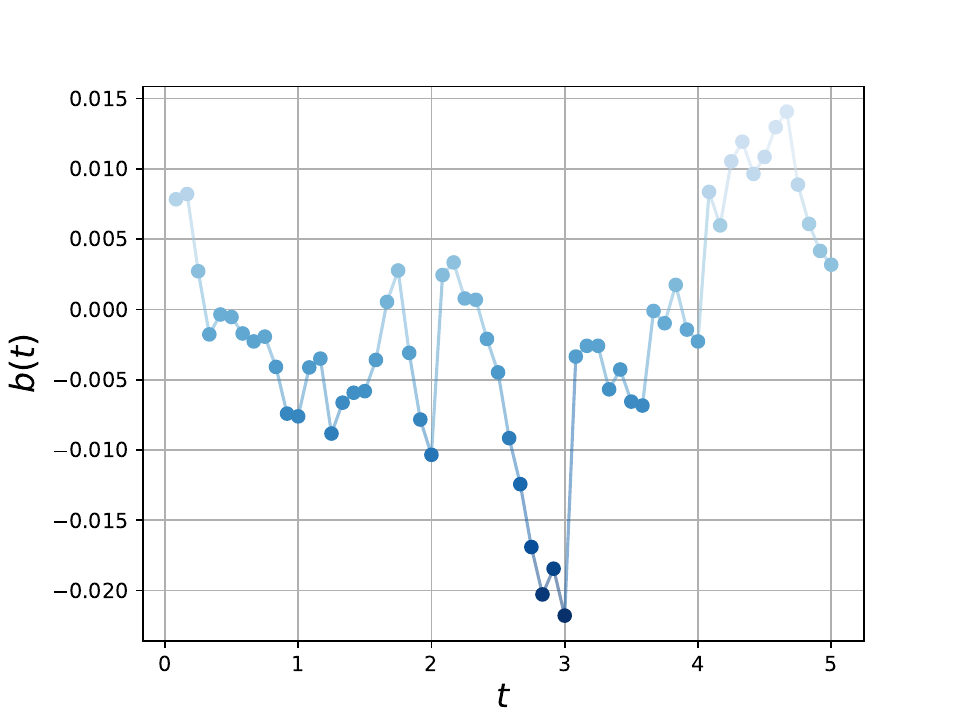} }}%
    \caption{(a) Rational (dashed red line) and irrational (solid blue line) rate incentive functions. Black stars represent the prepayment data. (b) Stochastic sigmoid function over time (solid lines in different gradations of blue), with prepayment data (black stars). (c) Projection of (b). (d) Time series for risk factor $b(t)$ corresponding to the sigmoid horizontal shift in (b) and (c).}%
    \label{fig: Sigmoid}%
\end{figure}

A deterministic relationship between $\varepsilon(t)$ and $\Lambda(t)$ is based on the assumption that the prepayment behavior only depends on the state of the market. We call such an assumption the \emph{deterministic rate-to-prepayment mapping assumption} (DMA). 
The main drawback of the DMA is neglecting that part of the prepayment behavior is not driven by ``financially rational'' variables.

Relaxing the DMA is one of the main contributions of this research to the existing literature. Particularly, we consider a stochastic $b(t)$ in \eqref{eq: CPRSigmoid}, obtaining a second risk factor $b(t)$ to account for the uncertainty embedded in the prepayment behavior. 
Risk factor $b(t)$ can be interpreted as a spread on the financial rate incentive $\varepsilon(t)$ in \eqref{eq: RateIncentive} aiming to capture deviations in the prepayment behavior compared to the financially rational choice. The combined quantity $\varepsilon(t)+b(t)$ is the \emph{perceived rate incentive} and in general, it can differ from the financial rate incentive.

Combining \Cref{eq: RateIncentive,eq: CPRSigmoid}, we write $\Lambda(t)$ as a function $h$ of the two risk factors $r(t)$ and $b(t)$, i.e:
\begin{equation}
\label{eq: CPRSigmoid2RiskFactors}
\begin{aligned}
        \Lambda(t)&={h}(t, r(t),b(t)),\\
        h(t, x,y)&={h}_{RI}(t, g(x);l,u,a,y),
\end{aligned}
\end{equation}
where we drop the explicit dependence on the structural parameters $l$, $u$, and $a$ to keep a simpler notation.

We remark that \Cref{eq: CPRSigmoid2RiskFactors} generalizes (\ref{eq: CPRSigmoid}) which is recovered as a limit case when $b(t)$ does not depend on the scenario $\omega$.
The stochastic coefficient $b$ gives rise to a time series of rate incentive functions for each realized scenario (see \Cref{fig: Sigmoid}b) resulting in a horizontal shift from a baseline sigmoid, see \Cref{fig: Sigmoid}c. This makes the model less sensitive to historical calibration, resulting in a robust framework.
Introducing a second stochastic risk factor to encapsulate uncertainty that market risk factors cannot explain constitutes a novelty in the framework of fixed-income option evaluation when a behavioral component is assessed. 

\subsection{Risk-neutral evaluation of the embedded prepayment option}
\label{ssec: RiskNeutralEvaluationEPO}

Consistently with the standard pricing literature, the EPO value process is evaluated under a risk-neutral measure $\Q$ and requires knowledge of the risk-neutral dynamics for the two risk factors $r(t)$ and $b(t)$.\footnote{More precisely, risk-neutral dynamics are needed when we intend to perform pricing using the money savings account as num\'eraire. In general, for any choice of num\'eraire, the pricing dynamics must be such that any tradable asset price process relative to the chosen num\'eraire is a martingale.} The model choice for the market risk factor $r(t)$ is the well-known Hull-White short rate model \citep{hull1990pricing},\footnote{The well-known model -- among both academics and practitioners -- that belongs to the arbitrage-free class of dynamics generalized by \citep*{heath1992bond}.} while the dynamics for $b(t)$ are modeled as an Ornstein–Uhlenbeck (OU) process \citep{uhlenbeck1930theory}, leading to the system of SDEs:
\begin{align}
\label{eq: RiskNeutralDynamicsRate}
    \d r(t) &= \alpha_r^{\Q} (\vartheta_r^{\Q}(t) - r(t)) \d t + \eta_r \d W^{\Q}_r(t),\qquad r(t_0)=r_0\in\R,\\
\label{eq: RiskNeutralDynamicsPrepayment}
    \d b(t) &= \alpha_b^{\Q} (\theta_b^{\Q} - b(t)) \d t + \eta_b \d W^{\Q}_b(t),\qquad \quad \;b(t_0)=b_0\in\R,
\end{align}
for $\alpha_r^{\Q},\alpha_b^\Q,\eta_r,\eta_b>0$, $\theta_b^{\Q}\in\R$, and $\vartheta_r^\Q(t)$ time-dependent. $W_r^\Q$ and $W_b^{\Q}$ are Wiener processes such that $\d W_r^\Q \d W_b^\Q = \rho\d t$ for $\rho\in[-1,1]$.
A mean-reverting process for $b(t)$ is chosen to include a controlled level of diffusion representing uncertainty in the prepayment behavior. Observing the time series for $b(t)$ from the realistic data (see \Cref{fig: Sigmoid}d), the behavior of $b$ is not purely ``diffusive'' but tends to fluctuate around a mean value. Furthermore, we remark that, even though $\rho$ can attain negative values, we expect it to be positive. A positive correlation with $r$ entails a negative correlation with $\varepsilon$, which is realistic since $b$ acts as an antagonist against the financially rational rate incentive, $\varepsilon$.

Given \eqref{eq: RiskNeutralDynamicsRate} and \eqref{eq: RiskNeutralDynamicsPrepayment}, and using $M(t)=\exp\int_{t_0}^t r(\tau)\d\tau$, the money savings account, the value process for the EPO reads, for $t\in\T$:
\begin{equation}
    \label{eq: PrepaymentValue}
    V(t)=\E_t^{\Q}\Bigg[\sum_{t_j \geq t}\frac{M(t)}{M(t_j)}\CF(t_j)\Bigg],
\end{equation}
where the cash flows $\CF(t_j)$ are defined in \eqref{eq: PrepaymentPayoff}, $(\F(t))_{t\in\T}$ is the augmented natural filtration generated by the two Wiener processes, $W_r^\Q$ and $W_b^{\Q}$, and $\E_t^\Q$ is the short notation for the conditional expectation with respect to $\F(t)$ under the measure $\Q$.

\begin{rem}[Real-world and risk-neutral dynamics]
\label{rem: ConnectionRealWorldRiskNeutralGirsanov}
    The parameters in \eqref{eq: RiskNeutralDynamicsRate} are calibrated based on market information and are uniquely determined. However, since there is no market information for the non-observable risk factor $b(t)$, the dynamics for $b(t)$ in \eqref{eq: RiskNeutralDynamicsPrepayment} are based on the history of prepayment. In formal terms, the information available is given by:
    \begin{equation}
        \label{eq: RealWorldDynamicsPrepayment}
    \d b(t) = \alpha_b^{\P} (\theta_b^{\P} - b(t)) \d t + \eta_b \d W^{\P}_b(t),
    \end{equation}
    where the parameters are obtained via historical calibration.
    
    In general, \eqref{eq: RealWorldDynamicsPrepayment} is not under risk-neutral dynamics and shall not be used for pricing. The Girsanov's theorem \citep{girsanov1960transforming} provides the mathematical toolbox to recover the $\Q$-dynamics in \eqref{eq: RiskNeutralDynamicsPrepayment} from the $\P$-dynamics in \eqref{eq: RealWorldDynamicsPrepayment}. Indeed, as long as $\P$ and $\Q$ are equivalent measures, it is possible to connect the two Wiener processes, $W_b^\P(t)$ and $W_b^\Q(t)$, using a suitable process $\lambda(t)$. In particular, the well-known relation reads:
    \begin{equation}
    \label{eq: MarketPriceOfRisk}
        \d W_b^\Q(t)=\d W_b^\P(t) + \lambda(t)\d t,
    \end{equation}
    and $\lambda(t)$ is the \emph{market price of risk} \citep[see, e.g.,][]{lintner1970market}.
\end{rem}

From \Cref{rem: ConnectionRealWorldRiskNeutralGirsanov}, we deduce that, because of the non-observable factor $b(t)$, the value process for the EPO is \emph{only} determined once the market price of risk is specified, and -- as a consequence -- is non-unique. To underline this aspect, we will explicitly report the dependence of the value process on the market price of risk $\lambda(t)$, namely $V(t)\equiv V(t;\lambda)$. The price dependence on the market price of risk $\lambda$ can be interpreted as the dependence of prices on the \emph{risk aversion parameter} in the theory of \emph{utility indifference pricing}. We refer to \citep[Chapter 2]{carmona2009indifference} for a detailed exposition on the topic.

\subsection{Effect of uncertainty on the embedded prepayment option}
\label{ssec: LowPriceEPO}

The inclusion of behavioral uncertainty reduces the price of the EPO, as shown in the following theorem for the special case of a bullet mortgage and stepwise incentive function (i.e. $a\to +\infty$) with $l=0$ and $u\leq \frac{1}{n}$. The incentive function described is the \emph{rational} version and the constraint on $u$ ensures the notional availability to be prepaid over the contract's life. Furthermore, we assume the prepayment only occurs at the reset dates $\T_{\tt{r}}=\{t_0,\dots,t_{n-1}\}$ based on the current realization of the economy $(r(t_j),b(t_j))$.

\begin{thm}
\label{thm: DecreasingPrepaymentPrice}
    We consider the map:
    \begin{equation*}
    v:\: \R^+\longrightarrow \R, \qquad \eta_b\longmapsto V(t_0;\eta_b),   
    \end{equation*}
    for $V(t_0;\eta_b)\equiv V(t_0)$ defined in \Cref{eq: PrepaymentValue} and $\eta_b$ the volatility coefficient in \eqref{eq: RiskNeutralDynamicsPrepayment}. Let us assume that:
    \begin{enumerate}
        \item \label{enum: InterestOnly} The mortgage amortization scheme is interest-only with fixed interest rate $K>0$;
        \item \label{enum: RationalIncentive} $a\to +\infty$ and $l=0$  in \eqref{eq: CPRSigmoid}, i.e. a rational rate incentive function;
        \item \label{enum: NotionalAvailability} $u\leq \frac{1}{n}$ in \eqref{eq: CPRSigmoid}, i.e. notional availability;
        \item \label{enum: PaymentsAtReset} Prepayments only occur at reset dates $\T_{\tt{r}}$;
        \item \label{enum: IndependentB} $r$ and $b$ in \eqref{eq: RiskNeutralDynamicsRate} and \eqref{eq: RiskNeutralDynamicsPrepayment} are uncorrelated with $b_0=\theta_b^\Q=0$.
    \end{enumerate}
     Then, $V(t_0;\eta_b)$ is decreasing in the volatility $\eta_b$.
\end{thm}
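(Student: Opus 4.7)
The plan is to reduce $V(t_0;\eta_b)$ to a sum of expectations that depend on $\eta_b$ only through the variance of the Gaussian marginal $b(t_i)$, and then apply a pointwise monotonicity estimate. Under assumptions \ref{enum: InterestOnly}--\ref{enum: PaymentsAtReset}, the rational step incentive collapses $\Lambda$ to point masses at the reset dates, so at each $t_i\in\T_{\tt{r}}$ an amount $u N_{\tt{c},0}\,\1_{\{\varepsilon(t_i)+b(t_i)>0\}}$ is prepaid, and the bound $u\leq 1/n$ ensures the cap in \eqref{eq: PrepaymentNotional} is never active. Plugging this into \eqref{eq: PrepaymentValue}, swapping the double sum with Fubini, and defining $S_i := \sum_{t_j>t_i}\frac{\Delta t_j}{M(t_j)}(K-F(t_{j-1};t_{j-1},t_j))$ yields
\begin{equation*}
    V(t_0;\eta_b) = u\,N_{\tt{c},0}\sum_{t_i\in\T_{\tt{r}}}\E^{\Q}\!\left[S_i\,\1_{\{\varepsilon(t_i)+b(t_i)>0\}}\right].
\end{equation*}

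Next, I would evaluate $\E^{\Q}_{t_i}[S_i]$ via the standard single-curve identity $\E^{\Q}_{t_i}[F(t_{j-1};t_{j-1},t_j)\Delta t_j / M(t_j)] = (P(t_i;t_{j-1})-P(t_i;t_j))/M(t_i)$, which follows from tower-conditioning at $t_{j-1}$ and $P(t;T)=M(t)\E^{\Q}_t[1/M(T)]$. A telescoping sum then produces
\begin{equation*}
    \E^{\Q}_{t_i}[S_i] = \frac{K\,\bar{A}(t_i) - 1 + P(t_i;t_n)}{M(t_i)} = \frac{\bar{A}(t_i)}{M(t_i)}\,\varepsilon(t_i),
\end{equation*}
with annuity $\bar{A}(t_i) := \sum_{t_j>t_i}\Delta t_j\,P(t_i;t_j)$; the second equality is exactly \Cref{def: SwapRate} specialized to a constant notional. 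Since $\1_{\{\varepsilon(t_i)+b(t_i)>0\}}$ is $\F(t_i)$-measurable, the tower property turns each summand into $\E^{\Q}[\bar{A}(t_i) M(t_i)^{-1}\,\varepsilon(t_i)\,\1_{\{\varepsilon(t_i)+b(t_i)>0\}}]$.

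Assumption \ref{enum: IndependentB} now drives the decisive factorization: $r$ and $b$ are independent, and $b(t_i)\sim\N(0,\sigma_b^2(t_i))$ with $\sigma_b^2(t_i)=\frac{\eta_b^2}{2\alpha_b^{\Q}}(1-e^{-2\alpha_b^{\Q} t_i})$. Conditioning on the $\sigma$-algebra generated by $r$ gives $\E^{\Q}[\1_{\{\varepsilon(t_i)+b(t_i)>0\}}\mid r] = \Phi(\varepsilon(t_i)/\sigma_b(t_i))$, where $\Phi$ is the standard normal CDF. Combining,
\begin{equation*}
    V(t_0;\eta_b) = u\,N_{\tt{c},0}\sum_{t_i\in\T_{\tt{r}}}\E^{\Q}\!\left[\frac{\bar{A}(t_i)}{M(t_i)}\,\varepsilon(t_i)\,\Phi\!\left(\frac{\varepsilon(t_i)}{\sigma_b(t_i)}\right)\right],
\end{equation*}
and the prefactor $\bar{A}(t_i)/M(t_i)$ is almost surely strictly positive.

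The proof then closes with a one-line pointwise monotonicity: for every $x\in\R$ and $\sigma>0$,
\begin{equation*}
    \frac{\partial}{\partial\sigma}\left[x\,\Phi(x/\sigma)\right] = -\frac{x^2}{\sigma^2}\,\phi(x/\sigma) \leq 0,
\end{equation*}
so $x\mapsto x\,\Phi(x/\sigma)$ is decreasing in $\sigma$ for every fixed $x$. Since $\sigma_b(t_i)$ is strictly increasing in $\eta_b$, each summand is $\P$-a.s.\ decreasing in $\eta_b$, and integration preserves the inequality, which yields the claim. The main obstacle I foresee is bookkeeping rather than analysis: carefully justifying the point-mass interpretation of $\Lambda$ in the rational limit $a\to+\infty$ and matching the telescoping computation to the paper's precise forward-rate conventions. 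Once those are pinned down, the theorem collapses to the elementary pointwise inequality above.
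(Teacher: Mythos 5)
Your proposal is correct, and it follows the paper's skeleton up to the decisive step: like the paper, you reduce the EPO under assumptions 1--4 to a sum over reset dates of discounted receiver-swap values weighted by the digital exercise indicator, and use the tower property to write each summand as an expectation of (annuity)$\times(K-\kappa(t_k))$ times the indicator. Where you diverge is in how monotonicity in $\eta_b$ is extracted. The paper splits $K-\kappa(t_k)$ into its positive and negative parts, changes to the $t_k$-forward measure, and invokes a separate comparison result (\Cref{lem: Lemma}) stating that, for independent $X$ and $Y_\sigma$ with first-order-stochastic-type dominance of the CDFs on $y>0$, the ``in-the-money'' contribution decreases and the ``out-of-the-money'' contribution increases with $\sigma$. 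You instead stay under $\Q$, condition on the $\sigma$-algebra generated by the rate path (legitimate since $\rho=0$ makes the $r$- and $b$-paths independent), obtain the closed-form Gaussian factor $\Phi\big(\varepsilon(t_k)/\sigma_b(t_k)\big)$, and conclude from the elementary fact that $x\mapsto x\,\Phi(x/\sigma)$ is pointwise non-increasing in $\sigma$ (strictly for $x\neq 0$), which treats both signs of $\varepsilon(t_k)$ in one stroke. Your route is shorter and avoids both the ITM/OTM case split and the num\'eraire change, at the price of using the Gaussian law of $b(t_k)$ explicitly, whereas the paper's lemma only needs the CDF-dominance property of the family $Y_\sigma$ and would survive non-Gaussian specifications of $b$; under the theorem's hypotheses the two are equivalent. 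Two cosmetic points: your exercise set $\{\varepsilon(t_k)+b(t_k)>0\}$ and the paper's $\{\kappa(t_k)+b(t_k)\leq K\}$ differ by the sign of $b$, which is immaterial because $b(t_k)$ is centered Gaussian independent of $\kappa(t_k)$; and the expectations are under $\Q$ (your ``$\P$-a.s.'' should read $\Q$-a.s.), with the $k=0$ term constant in $\eta_b$ since $\sigma_b(t_0)=0$ --- neither affects the argument.
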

\begin{proof}
    The proof is given in Appendix \ref{app: ProofTheorem}.
\end{proof}

\begin{cor}
\label{cor: CheapPrepaymentPrice}
    Under the same hypothesis as in \Cref{thm: DecreasingPrepaymentPrice}, the cost of prepayment obtained after relaxing the DMA, i.e. when a non-trivial stochastic risk factor $b$ is included, is lower than the cost computed with the DMA of the rate incentive into the fraction of prepayment.
\end{cor}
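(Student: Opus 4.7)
The plan is to obtain this corollary as an essentially immediate consequence of Theorem~\ref{thm: DecreasingPrepaymentPrice}. First, I would identify the DMA scenario with a specific value of $\eta_b$ in the risk-neutral OU dynamics \eqref{eq: RiskNeutralDynamicsPrepayment}. Under the hypothesis $b_0=\theta_b^\Q=0$, setting $\eta_b=0$ collapses \eqref{eq: RiskNeutralDynamicsPrepayment} to the deterministic ODE $\d b(t)=-\alpha_b^\Q b(t)\d t$ with initial condition $b(t_0)=0$, whose unique solution is $b(t)\equiv 0$. In that case $\Lambda(t)=h(t,r(t),0)=h_{RI}(t,g(r(t));l,u,a,0)$, so the prepayment rate is a deterministic function of the single market risk factor $r(t)$; this is precisely the DMA baseline against which we want to compare.

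Second, I would invoke Theorem~\ref{thm: DecreasingPrepaymentPrice} to conclude that for any $\eta_b>0$ the value map $v:\eta_b\mapsto V(t_0;\eta_b)$ is strictly decreasing, yielding
\begin{equation*}
    V(t_0;\eta_b) \;<\; V(t_0;0) \;=\; V^{\mathrm{DMA}}(t_0),
\end{equation*}
which is exactly the statement of the corollary: the EPO value under the extended model (non-trivial stochastic $b$) is strictly smaller than under the DMA pricing.

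The only subtlety, which I would make explicit in the write-up, is the identification of the $\eta_b=0$ case with the DMA: because the hypotheses of Theorem~\ref{thm: DecreasingPrepaymentPrice} fix $b_0=\theta_b^\Q=0$, the limiting deterministic trajectory is $b\equiv 0$ rather than an arbitrary deterministic shift, so the comparison is with the DMA constructed via the rational sigmoid centered at zero. Since this identification is direct and the monotonicity in $\eta_b$ has already been proven, I do not anticipate a genuine obstacle; the corollary is essentially a restatement of Theorem~\ref{thm: DecreasingPrepaymentPrice} at the endpoint $\eta_b=0$.
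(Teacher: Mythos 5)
Your proposal is correct and matches the paper's own argument: the paper likewise identifies the DMA with the degenerate case $\eta_b=0$ (where, with $b_0=\theta_b^\Q=0$, the factor collapses to $b\equiv 0$) and then applies Theorem~\ref{thm: DecreasingPrepaymentPrice} directly. Your added remark making the identification $b\equiv 0$ explicit is a harmless elaboration of the same one-line proof.
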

\begin{proof}
    Observing that the DMA is equivalent to a zero volatility $\eta_b$, the proof is a trivial application of \Cref{thm: DecreasingPrepaymentPrice}.
\end{proof}

\begin{rem}
    The result in \Cref{thm: DecreasingPrepaymentPrice} holds in the more general case where the prepayment dates and the amount prepaid at each date are assumed to be known a priori. The intuition is that as long as we manage to represent the EPO value as a sum of European-type swaptions with exercise region $\kappa(t_k)+b(t_k)<K$, for $k=0,\dots,n-1$,\footnote{As done in the proof of \Cref{thm: DecreasingPrepaymentPrice}.} any uncertainty in $b(t_k)$ entails a suboptimal exercise, and, as a consequence, a lower value. When the notional availability is path-dependent and \emph{not known} in advance (hence, we might ``run out'' of notional to prepay), the $b(t_k)$ term in the exercise region may act as a correction against a suboptimal ``early-exercise'' of the EPO (that would coincide with the exercise of a standard European-type swaption).
\end{rem}

\subsection{Replication of the embedded prepayment option}
\label{ssec: HedgingPrepayment}

The stochastic risk factor $b$ is \emph{non-observable} and inherently \emph{non-hedgable}, meaning that we are in the framework of an \emph{incomplete economy} \citep{bjork2019arbitrage}. 
The presence of the non-hedgeable risk factor $b$ (combined with discrete monitoring) makes the perfect replication of the EPO infeasible. Indeed, no tradable instrument can mirror the uncertainty in the EPO value since no market instrument depends on $b$ itself. Taking a risk management perspective, the hedging problem is formalized as an optimization problem whose solution is the optimal, static replicating strategy that minimizes the path-wise exposure of the EPO -- for a certain monitoring time window -- by investing in tradable instruments, such as IRSs and swaptions. 

\subsubsection{Objective of the replication}

Let us introduce a set of $I\in\NN\backslash\{0\}$ market instruments available for the replication.\footnote{In the specific case we will consider interest rate swaps and swaptions.} For every instrument, $i=1,\dots,I$, we indicate its value process as $S_i(t)$, $t\in\T$. Furthermore, the static number of units the hedger holds in each instrument is $w_i$. 

The objective is to find the optimal allocation, $\w=[w_1,\dots,w_I]^\top$, to minimize some metric of interest based on the path-wise distance between the wealth invested in the EPO and the wealth invested in the replicating instruments. Since the considered instruments bear cash flows over their life, we have to account for those as well. We assume every cash flow is immediately invested in a risk-free money-savings account with yearly continuously compounded return given by $r(t)$.
For a specific market price of risk $\lambda(t)$ in \eqref{eq: MarketPriceOfRisk}, we define the two wealth processes for the EPO and the replicating strategy, respectively, as:
\begin{align}
\label{eq: WealthEPO}
    \W_{V}(t; \lambda) &= V(t;\lambda) + C_{V}(t; \lambda),\\
    \label{eq: WealthHedge}
    \W_S(t;\w) &= \sum_i w_i \big(S_i(t) + C_i(t)\big)= \sum_i w_i \W_i(t),
\end{align}
where $C_V(t; \lambda)$, $C_i(t)$ are cash accounts where the cash flows are collected and invested at the risk-free rate, $\W_i(t)$ is the wealth at time $t$ generated by purchasing one unit of the $i$-th instrument at time $t_0$, and $V(t; \lambda)=V(t)$ in \eqref{eq: PrepaymentValue}

We introduce a \emph{signed distance} process $\D$ as:
\begin{equation}
    \label{eq: SignedDistance}
    \D(t;\w,\lambda) = \W_V(t;\lambda) - \W_S(t;\w),
\end{equation}
bearing the information on the path-wise mismatch between the exposure (EPO) and the replication.

Based on the process in \eqref{eq: SignedDistance}, we may define different loss functions targeting various aspects of the exposure that may be relevant. 
For instance, using $M^p(X)$, $p\in\NN$, to indicate the standard $p$-order absolute moment of a (sufficiently integrable) random variable $X$, i.e. $M^p(X)=\E^\cdot[|X|^p]$, we can enforce path-wise matching by considering a loss function given by:
\begin{equation}
\label{eq: LossObjectiveMoments}
    \L_{M^p}(\w,\lambda)= \int_{t_0}^{T} \alpha(t)M^p\big(\D(t;\w,\lambda)\big)\d t,
\end{equation}
where $\alpha(t)$, with $\int_{t_0}^T \alpha(t)\d t=T - t_0$, is a deterministic weight controlling the focus of the optimization along the time axis, while the choice of $p$ affects the importance given to the tails of the distance distribution compared to the center.

Another specification may enforce focus on the right tail of the signed distance distribution by including in the loss function the right expected shortfall to a certain level $q\in (0,1)$, namely:
\begin{equation}
\label{eq: LossObjectiveExpectedShortfall}
    \L_{ES^+_q}(\w,\lambda)= \int_{t_0}^{T} \alpha(t)ES^+_q\big(\D(t;\w,\lambda)\big)\d t,
\end{equation}
where $ES^+_q(X)=\E^\cdot[X|X>Q_X(q)]$, with $Q_X$ the quantile function of $X$. Similarly, we may define a loss function based on the left expected shortfall $ES^-_q(X)=\E^\cdot[X|X<Q_X(q)]$.

In general, based on the purpose of the replication task, we can combine different metrics to prioritize different regions of the distance distribution in the optimization process. By combining the losses defined in \eqref{eq: LossObjectiveMoments} and \eqref{eq: LossObjectiveExpectedShortfall}, we obtain the objective:
\begin{equation}
\label{eq: LossObjectiveGeneral}
    \L_{p,q}(\w,\lambda)= \L_{M^p}(\w,\lambda) + k\cdot\L_{ES^+_{q}}(\w,\lambda).
\end{equation}
Here, the constant $k>0$ controls the two effects and has to be tuned.
For instance, fixing $p=2$ and $q=0.9$, the loss function obtained penalizes both the mean squared error and the right tail error more extreme than the 90\% quantile, ensuring a good path-wise replication and avoiding large deviations between the EPO exposure and the replication. 

\subsubsection{Conditional and unconditional replication}

Through the EPO wealth process $\W_V(t;\lambda)$, every loss function depends on the specification of the market price of risk $\lambda(t)$ (see \Cref{rem: ConnectionRealWorldRiskNeutralGirsanov}). This implies that the direct minimization of the loss function would result in a \emph{conditional optimal strategy}. Here, ``conditional'' means that the result of the optimization is optimal only if the specified market price of risk is realized. Formally, we set the minimization problem:
\begin{equation}
\label{eq: MinimizationProblem}
    \underset{\w\in\R^I}{\min} \:\L_{p,q}(\w,\lambda),
\end{equation}
for the loss function $\L_{p,q}$ defined in \Cref{eq: LossObjectiveGeneral}. If a solution exists, we indicate it with $\w^*(\lambda)=\arg \min_{\w}\L_{p,q}(\w,\lambda)$ to underline the dependence on a specific market price of risk. In other words, under $\lambda$, the optimal strategy $\w^*(\lambda)$ is the vector $\w$ that attains the loss $\L_{p,q}(\w,\lambda)$ minimum.

Nonetheless, because of the non-observable nature of $b(t)$, $\lambda(t)$ is in general not known. Hence, we also define a different optimization problem leading to a \emph{robust} strategy against a possible misspecification of the market price of risk. With this purpose in mind, a robust replication problem is recast as:
\begin{equation}
\label{eq: MinMaxProblem}
    \underset{\w\in\R^I}{\min} \:\underset{\lambda\in\M}{\max}\:\L_{p,q}(\w,\lambda),
\end{equation}
where $\M$ is the domain of the possible market prices of risk. 
A solution $(\lambda^*,\w^*)$ of \eqref{eq: MinMaxProblem} is \emph{robust} in the following sense. Consider a specification of the market price of risk, $\lambda^*$, and of the optimal robust strategy, $\w^*$. Such a pair has a loss equal to $L^*=\L(\w^*,\lambda^*)$. Then, $\w^*$ is (locally)\footnote{Without specific assumption on the objective function, the property only holds locally. The convexity/concavity of the objective function determines such a region.} \emph{robust} because any (small) misspecification of the market price of risk (i.e., (small) variation around $\lambda^*$) leads to a loss smaller than $L^*$, namely to a better performance of the hedging strategy.
Effectively, solving \eqref{eq: MinMaxProblem} corresponds to finding the best replication that minimizes the loss in a locally-worst-case scenario as, for instance, in \citep{balter2020pricing} where the problem of hedging a liability defined in an incomplete economy is addressed.

Neither the pricing in \eqref{eq: PrepaymentValue}, nor the optimizations in \eqref{eq: MinimizationProblem} and \eqref{eq: MinMaxProblem} allow for an analytic solution, the task is solved numerically, which is the purpose of the next section.

\section{Numerical pricing and replication}
\label{sec: NumericalPricingReplication}

\subsection{EPO pricing}
\label{ssec: NumericalPricing}

The EPO value process defined in \Cref{eq: PrepaymentValue} cannot be computed in analytic form due to the nonlinear, path-dependent payoff given in \eqref{eq: PrepaymentPayoff}. We will use a modification of the \emph{least squares Monte Carlo} (LSM) method introduced by \citep{longstaff2001valuing} in the framework of pricing American/Bermuda options. 
The modified LSM pricing scheme is obtained based on the following proposition.

\begin{prop}
\label{prop: RecursiveRelationEPO}
    Let $\tn(t)=\inf\{t_j\geq t: t_j\in\T_{\tt{p}}\}$ with $\T_{\tt{p}}$ the set of payment dates defined in \eqref{eq: PaymentDates}, and let $\tp(t)=\sup\{t_j< \tn(t): t_j\in\T_{\tt{r}}\}\}$ with $\T_{\tt{r}}=\{t_0,\dots,t_{n-1}\}$ the set of reset dates of the reference forward rate $F(t;t_{j-1},t_j)$ in \Cref{def: FRN}. By convention, if $\{t_j\geq t: t_j\in\T_{\tt{p}}\}=\emptyset$, then $\tn(t)=\tp(t)=t$. \\
    For $t\in\T=[t_0,T]$, let us define the auxiliary quantities $\CFk(t)$, $\CFu(t)$, and $\FV(t)$ as:
    \begin{align}
    \label{eq: IntegralKnown}
        \CFk(t)&=P(t;\tn(t))\int_{\tp(t)}^tN (\tau)\d \tau,\\
            \label{eq: IntegralUnknown}
        \CFu(t)&=\E_t^\Q\bigg[\frac{M(t)}{M(\tn(t))}\int_{t}^{\tn(t)} N (\tau)\d \tau\bigg],\\
            \label{eq: FutureCashflowsValue}
        \FV(t)&=\E_t^\Q\bigg[\sum_{t_j > \tn(t)}\frac{M(t)}{M(t_j)}\CF(t_j)\bigg].
    \end{align}
    Then, for $t\in\T$, the EPO value $V(t)$ is given by:
    \begin{equation}
    \label{eq: PricingFormula}
        V(t)=\big(K-F(\tp(t);\tp(t),\tn(t))\big)\big(\CFk(t) + \CFu(t)\big)+\FV(t).
    \end{equation}
    Furthermore, given $t_+>t$, the following recursion holds:
    \begin{enumerate}
        \item When $\tn(t)=\tn(t_+)$, namely $t<t_+\leq \tn(t)$:
    \begin{align}
    \label{eq: RecursionH}
        \CFu(t) &= \E_t^\Q\bigg[\frac{M(t)}{M(t_+)}\Big(P(t_+;\tn(t_+))\int_{t}^{t_+} N(\tau)\d\tau + \CFu(t_+)\Big)\bigg],\\
        \label{eq: RecursionF}
        \FV(t) &= \E_t^\Q\bigg[\frac{M(t)}{M(t_+)}\FV(t_+)\bigg].
    \end{align}
    \item When $\tn(t)=\tp(t_+)$ and $t=\tn(t)<t_+$:
    \begin{align}
    \label{eq: RecursionHTrivial}
        \CFu(t) &= 0,\\
        \label{eq: RecursionV}
        \FV(t) &= \E_t^\Q\bigg[\frac{M(t)}{M(t_+)}V(t_+)\bigg].
    \end{align}
\end{enumerate}
    
\end{prop}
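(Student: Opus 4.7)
The plan is to decompose $V(t)$ into the value of the next pending payment at $\tn(t)$ and the value of all subsequent payments, and then to derive the two-step recursions by iterated conditioning on $\F(t_+)$.

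For the closed-form representation \eqref{eq: PricingFormula}, I would split the sum in $V(t)=\E_t^\Q[\sum_{t_j\geq t}(M(t)/M(t_j))\CF(t_j)]$ into the single term at $t_j=\tn(t)$ and a tail which is $\FV(t)$ by definition \eqref{eq: FutureCashflowsValue}. Using \eqref{eq: PrepaymentPayoff}, the factor $K-F(\tp(t);\tp(t),\tn(t))$ is $\F(t)$-measurable since $\tp(t)\leq t$, so it can be pulled outside the expectation. The remaining integral can be split as $\int_{\tp(t)}^{\tn(t)} N(\tau)\d\tau = \int_{\tp(t)}^{t} N(\tau)\d\tau + \int_{t}^{\tn(t)} N(\tau)\d\tau$. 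The first integrand is $\F(t)$-measurable and factors out, and since $\E_t^\Q[M(t)/M(\tn(t))]=P(t;\tn(t))$, that piece reproduces $\CFk(t)$ as in \eqref{eq: IntegralKnown}; the second piece is $\CFu(t)$ by \eqref{eq: IntegralUnknown}. Combining yields \eqref{eq: PricingFormula}.

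For the recursions, both cases reduce to the tower property at $t_+$. In Case 1, where $t<t_+\leq \tn(t)$ and thus $\tn(t)=\tn(t_+)$, inserting $\E_{t_+}^\Q[\cdot]$ inside the sum defining $\FV(t)$ directly gives \eqref{eq: RecursionF}. For $\CFu(t)$ I would write $\int_t^{\tn(t)} N(\tau)\d\tau = \int_t^{t_+} N(\tau)\d\tau + \int_{t_+}^{\tn(t_+)} N(\tau)\d\tau$; conditioning on $\F(t_+)$, the first piece factors out its $\F(t_+)$-measurable integral and leaves $\E_{t_+}^\Q[M(t_+)/M(\tn(t_+))]=P(t_+;\tn(t_+))$ inside, while the second piece becomes $\CFu(t_+)$, which yields \eqref{eq: RecursionH}. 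In Case 2, where $t=\tn(t)<t_+$, the integral in \eqref{eq: IntegralUnknown} is over $[t,t]$ and thus $\CFu(t)=0$. The constraint $\tp(t_+)=\tn(t)=t$ forces $t_+$ to lie in the open-closed interval between $t$ and the next payment date, so $\tn(t_+)$ is exactly the first payment date strictly greater than $t=\tn(t)$, which makes the summation $\sum_{t_j>\tn(t)}$ inside $\FV(t)$ coincide with the range $\sum_{t_j\geq t_+}$ appearing in $V(t_+)$; the tower property then delivers \eqref{eq: RecursionV}.

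The main obstacle is the indexing bookkeeping surrounding the operators $\tn$ and $\tp$ --- in particular, verifying in Case 2 that the conventions at payment dates and the constraint $\tp(t_+)=\tn(t)$ really do identify the tail $\sum_{t_j>\tn(t)}$ with the full range defining $V(t_+)$. Once these measurability and indexing details are checked, the remainder is a routine application of the tower property together with the bond-price identity $\E_t^\Q[M(t)/M(s)]=P(t;s)$.
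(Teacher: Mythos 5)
Your proposal is correct and follows essentially the same route as the paper's proof: splitting $V(t)$ into the next cash flow at $\tn(t)$ plus the tail $\FV(t)$, pulling out the $\F(t)$-measurable rate factor and known integral via $\E_t^\Q[M(t)/M(\tn(t))]=P(t;\tn(t))$, and obtaining both recursions by splitting the notional integral at $t_+$ and applying the tower property, with the same indexing check in Case 2 that the cash flows after $t=\tn(t)$ are exactly those valued in $V(t_+)$.
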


\begin{proof}
    The proof is given in Appendix \ref{app: ProofProposition}.
\end{proof}

Based on \eqref{eq: PricingFormula} in \Cref{prop: RecursiveRelationEPO}, we define a pricing routine to compute the value paths of the EPO. The task requires computing the conditional expectation $\E_t^\Q$ appearing in the definitions of $\CFu$ and $\FV$, \eqref{eq: IntegralUnknown} and \eqref{eq: FutureCashflowsValue}, respectively. When the quantities of interest are not path-dependent and the underlying state variables are Markov processes, then the expectation conditional to $\F(t)$ is approximated via regression onto a subspace spanned by the state variables at time $t$. Even though \eqref{eq: IntegralUnknown} and \eqref{eq: FutureCashflowsValue} involve path-dependent quantities with respect to $r$ and $b$, we can overcome such a problem by including the notional $N$ in the set of the state variables. Since $N$ is a Markov process, given the vector of state variables, $X(t)=[r(t), b(t), N(t)]^\top$, \eqref{eq: IntegralUnknown} and \eqref{eq: FutureCashflowsValue} read:
\begin{align}
\label{eq: DoobH}
    \CFu(t)=\E^\Q\bigg[\frac{M(t)}{M(\tn(t))}\int_{t}^{\tn(t)} N (\tau)\d \tau\Big|X(t)\bigg]&=:h_t(X(t))=\sum_\ell \beta^{h,t}_{\ell} \psi_\ell(X(t)),\\
    \label{eq: DoobF}
    \FV(t)=\E^\Q\bigg[\sum_{t_j > \tn(t)}\frac{M(t)}{M(t_j)}\CF(t_j)\Big|X(t)\bigg]&=:f_t(X(t))=\sum_\ell \beta^{f,t}_{\ell} \psi_\ell(X(t)),
\end{align}
for appropriately $\sigma(X(t))$-measurable functions $h_t$ and $f_t$ \citep[Doob's measurability criterion]{baldi2017stochastic}, and a suitable basis $(\psi_\ell)_\ell$ for the space of the $\sigma(X(t))$-measurable functions.
In general, $h_t$ and $f_t$ are not known but can be approximated by truncation of the right-hand side of \eqref{eq: DoobH} and \eqref{eq: DoobF}:
\begin{align}
\label{eq: TruncatedDoobH}
    \CFu(t)=h_t(X(t))\approx\sum_{\ell=0}^L \beta^{h,t}_{\ell} \psi_\ell(X(t))&=:\widehat{h_t}(X(t)),\\
    \label{eq: TruncatedDoobF}
    \FV(t)=f_t(X(t))\approx \sum_{\ell=0}^L \beta^{f,t}_{\ell} \psi_\ell(X(t))&=:\widehat{f_t}(X(t)).
\end{align}

The coefficients $(\beta^{h,t}_{\ell})_{\ell=0}^L$ and $(\beta^{f,t}_{\ell})_{\ell=0}^L$ are the results of least squares regression using the recursion in \Cref{eq: RecursionH,eq: RecursionF,eq: RecursionHTrivial,eq: RecursionV}. This method, introduced in \citep{longstaff2001valuing}, is based on the geometric interpretation of the conditional expectation. Particularly, for generic random variables $X$ and $Y$, $\E[Y|X]$ is the orthogonal $L^2$-projection of $Y$ onto the subspace of $\sigma(X)$-measurable random variables \citep[Remark 4.3]{baldi2017stochastic}. As a consequence, $(\beta^{h,t}_{\ell})_{\ell=0}^L$ and $(\beta^{f,t}_{\ell})_{\ell=0}^L$ are obtained as solutions of the problems:
\begin{equation*}
    \min_{\beta^{h,t}} \E^\Q_{t_0}\bigg[\Big(Y^{\CF}_t - \widehat{h_t}(X(t))\Big)^2\bigg],\qquad
    \min_{\beta^{f,t}} \E^\Q_{t_0}\bigg[\Big(Y^{\FV}_t - \widehat{f_t}(X(t))\Big)^2\bigg],
\end{equation*}
with $Y^{\CF}_t$ and $Y_t^{\FV}$ the arguments in the conditional expectations of \eqref{eq: RecursionH} and \eqref{eq: RecursionF} (or \eqref{eq: RecursionV}) of \Cref{prop: RecursiveRelationEPO}, respectively. Detailed pseudo-code for the EPO pricing routine with LSM is given in Algorithm \ref{alg: Algorithm} (see Appendix \ref{app: Algorithm}).

\subsection{EPO replication}
\label{ssec: NumericalReplication}

The task is performed assuming a static pre-commitment strategy \citep[see, e.g.,][]{vigna2020time} that minimizes the exposure of the financial institutions in the two scenarios described by \Cref{eq: MinimizationProblem,eq: MinMaxProblem}, respectively. The first case corresponds to the assumption of a given market price of risk, leading to a standard minimization problem; while the second case aims to find the optimal strategy against the worst possible case.

\subsubsection{Conditional static replication}
\label{sssec: ConditionalReplication}
When a market price of risk $\lambda$ realises, the optimal strategy is computed by solving \Cref{eq: MinimizationProblem} for an objective of the optimization as given, for instance, in \Cref{eq: LossObjectiveMoments,eq: LossObjectiveExpectedShortfall,eq: LossObjectiveGeneral}. 
When a loss function of the kind \eqref{eq: LossObjectiveMoments} is chosen, with order $p=2$, we obtain a convex objective function that allows for an analytical solution. In particular, the objective of the optimization reads:
\begin{equation}
\label{eq: QuadraticLoss}
    \L_{M^2}(\w,\lambda)=\w^\top X \w + \y(\lambda)^\top\w + z(\lambda),
\end{equation}
where $X=\{x_{i,\bar{i}}\}_{i,\bar{i}}\in\R^{I\times I}$, $\y(\lambda)=\{y_i(\lambda)\}_i\in\R^I$, and $z(\lambda)\in\R$.\footnote{By $x_{i,\bar{i}}$ we indicate the value in the $(i,\bar{i})$ position of the matrix $X$, while $y_i(\lambda)$ indicate the $i$-th entry of the vector $\y(\lambda)$.} The coefficients of $x_{i,\bar{i}}$, $y_i(\lambda)$, and $z(\lambda)$ are given by:
\begin{equation}
\label{eq: CoefficientsMeanSqauredError}
    \begin{aligned}
    x_{i,\bar{i}}&=\int_{t_0}^T \alpha^2(t) \E_{t_0}^{\Q_\lambda}\Big[\W_i(t)\W_{\bar{i}}(t)\Big]\d t,\\
    y_i(\lambda)&=\int_{t_0}^T \alpha^2(t)\E_{t_0}^{\Q_\lambda}\Big[\W_i(t)\W_{V}(t;\lambda)\Big]\d t,\\
    z(\lambda)&=\int_{t_0}^T \alpha^2(t)\E_{t_0}^{\Q_\lambda}\Big[\W_{V}^2(t;\lambda)\Big]\d t,
    \end{aligned}
\end{equation}
with $\W_V(t;\lambda)$ and $\W_i(t)$ as in \Cref{eq: WealthHedge,eq: WealthEPO}, and $\alpha(t)$ as in \Cref{eq: LossObjectiveMoments}. We observe that even if the coefficients $x_{i,\Bar{i}}$ depend on $\lambda$ through the reference measure $\Q_\lambda$, they do not depend on $\lambda$ explicitly since the expectation $\E_{t_0}^{\Q_\lambda}\Big[\W_i(t)\W_{\bar{i}}(t)\Big]$ is constant in $\lambda$.

A conditional optimal solution is computed imposing the first-order conditions on the objective function, namely:
\begin{equation}
\label{eq: FirstOrderConditionsMin}
   \nabla_\w\L_{M^2}(\w,\lambda)=0,
\end{equation}
for $\nabla_\w$ the gradient w.r.t. the notional allocated in each instrument. Hence, an optimal strategy $\w^*(\lambda)$ is obtained as:\footnote{$X^{-1}$ is the pseudo-inverse matrix of $X$. If $rank(X)<I$, then the optimal solution is not unique.}
\begin{equation*}
    \w^*(\lambda) = X^{-1}\y(\lambda),
\end{equation*}
where we approximate the coefficients of $X$ and $\y(\lambda)$ in \eqref{eq: CoefficientsMeanSqauredError} using MC sampling for the inner expectations and numerical integration for the outer time-integral.

The theory provided so far only allows us to compute the conditional optimal strategy for a quadratic objective function. We compute the optimal solution via numerical optimization techniques in the general case, such as the one given by the objective in \Cref{eq: LossObjectiveGeneral}.

\subsubsection{Robust static replication}
\label{sssec: RobustReplication}
The robust optimization problem given in \Cref{eq: MinMaxProblem} is a more complex task, even numerically, since it requires a search over the space $\M$. For this purpose, it is convenient to restrict the search domain of the market price of risk $\lambda(t)$. Considering the market price of risk in \eqref{eq: MarketPriceOfRisk}, a first restriction is to consider affine processes in the risk factor $b(t)$, i.e.:
\begin{equation*}
    \lambda(t) = \lambda_0 + \lambda_1 b(t), \qquad \lambda_0,\lambda_1\in\R.
\end{equation*}
This is the specification for $\lambda(t)$ that allows both \eqref{eq: RiskNeutralDynamicsPrepayment} and \eqref{eq: RealWorldDynamicsPrepayment} to be governed by OU dynamics. 
Particularly, the coefficients in \eqref{eq: RiskNeutralDynamicsPrepayment} are computed as:
\begin{equation}
\label{eq: RiskNeutralOUParams}
    \alpha_b^{\Q} = \alpha_b^{\P} + \eta_b\lambda_1, \quad
    \theta_b^{\Q}=\frac{\alpha_b^{\P}\theta_b^{\P} - \eta_b\lambda_0}{\alpha_b^{\Q}},
\end{equation}
with $\alpha^\P_b$, $\theta^\P_b$, and $\eta_b$ from \eqref{eq: RealWorldDynamicsPrepayment}.

Considering the domain restriction $\alpha_b^\Q>0$ for a meaningful OU process, we get the structural constraint $\lambda_1>-{\alpha_b^\P}/{\eta_b}$. Such a restriction requires solving the optimization problem on the semi-plane spanned by $\lambda_1>-{\alpha_b^\P}/{\eta_b}$. Based on the mortgage issuer's risk aversion/appetite, bounds are specified for the search domain. Different choices of such bounds represent different levels of belief regarding the extreme cases to investigate. The risk factor $b$ can be interpreted as a fictitious rate -- a \emph{spread} -- that distorts the people's reaction to the financial rate incentive. Such interpretation is useful to imply a domain restriction on $(\lambda_0,\lambda_1)$ that reflects realistic cases for the parameters $\theta_b^\Q$ and $\alpha_b^\Q$. A similar approach appears in the theory of \emph{good deals} where extreme values for the \emph{Sharpe ratio} were used to imply a realistic family of risk-neutral measures when an incomplete economy was considered \citep[see][]{cochrane2000beyond,bjork2006towards}.

We set $\overline{\alpha_b^\Q}$ as the maximum value for $\alpha_b^\Q$, and $\underline{\theta_b^\Q}$ and $\overline{\theta_b^\Q}$ the minimum and maximum values for $\theta_b^\Q$, respectively. Thus, we define a rectangular domain on $(\theta_b^\Q, \alpha_b^\Q)$, that corresponds to a bounded search domain in $(\lambda_0,\lambda_1)$. In particular, the search domain is identified by the inequalities:
\begin{equation}
\label{eq: SearchDomain}
        \lambda_1 > \underline{\lambda_1},\qquad
        \lambda_1 \leq \overline{\lambda_1},\qquad
        \lambda_0 \geq \underline{m}\lambda_1+\underline{q},\qquad
         \lambda_0 \leq \overline{m}\lambda_1+\overline{q},
\end{equation}
for the constants:
\begin{equation*}
    \underline{\lambda_1}=-\frac{\alpha_b^\P}{\eta_b},\quad \overline{\lambda_1}=\frac{\overline{\alpha_b^\Q}-\alpha_b^\P}{\eta_b},\quad \underline{m}=-\overline{\theta_b^\Q},\quad \overline{m}=-\underline{\theta_b^\Q},\quad \underline{q}=\frac{\alpha_b^\P}{\eta}(\theta_b^\P-\overline{\theta_b^\Q}),\quad \overline{q}=\frac{\alpha_b^\P}{\eta}(\theta_b^\P-\underline{\theta_b^\Q}).
\end{equation*}
For $\underline{\theta_b^\Q}<\overline{\theta_b^\Q}$, the first inequality in \eqref{eq: SearchDomain} is redundant (the last two inequalities imply it). Hence, the ``rectangular domain'' in $(\theta_b^\Q,\alpha_b^\Q)$ coordinates corresponds to a ``triangular domain'' in $(\lambda_0,\lambda_1)$ coordinates with vertexes, respectively:
\begin{equation*}
    \bigg(\frac{\alpha^\P_b\theta^\P_b}{\eta_b},-\frac{\alpha_b^\P}{\eta_b}\bigg),\qquad\bigg(\frac{\alpha^\P_b\theta^\P_b-\overline{\alpha_b^\Q}\underline{\theta_b^\Q}}{\eta_b},\frac{\overline{\alpha_b^\Q}-{\alpha_b^\P}}{\eta_b}\bigg),\qquad\bigg(\frac{\alpha^\P_b\theta^\P_b-\overline{\alpha_b^\Q}\overline{\theta_b^\Q}}{\eta_b},\frac{\overline{\alpha_b^\Q}-{\alpha_b^\P}}{\eta_b}\bigg).
\end{equation*}
Observe that $\theta_b^\P$ controls the horizontal location of the search domain, while the length of the two diagonal sides is increasing in $|\underline{\theta_b^\Q}|$ and $|\overline{\theta_b^\Q}|$, respectively. In particular, the triangle is isosceles when $|\underline{\theta_b^\Q}|=|\overline{\theta_b^\Q}|$.
An illustration is given in \Cref{fig: SearchDomain} for the case $\theta_b^\P=0$, and $\underline{\theta_b^\Q}<0<\overline{\theta_b^\Q}$ with $|\underline{\theta_b^\Q}|<|\overline{\theta_b^\Q}|$.

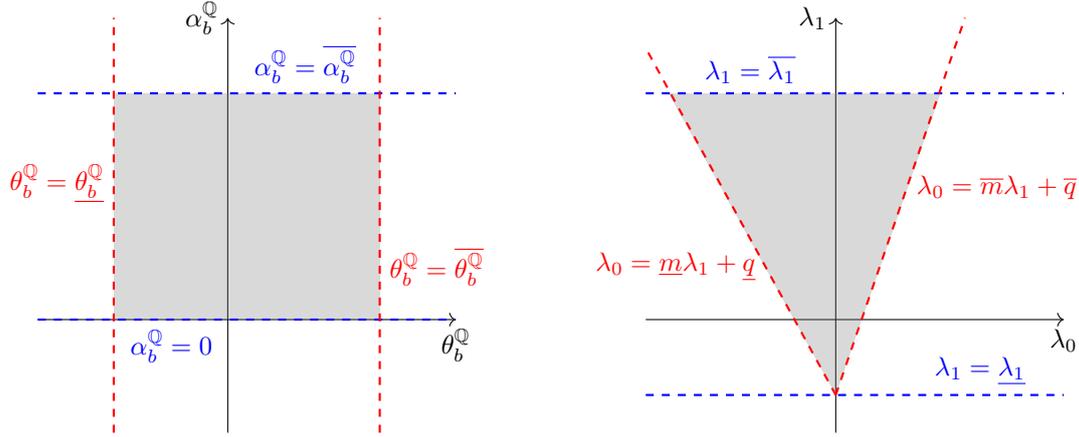
\begin{figure}[t]
    \centering
    \begin{tikzpicture}
    % Draw the second Cartesian axis (now first) shifted to the left
    \begin{scope}[xshift=-8cm]
        % Define the points for the background region in the second Cartesian axis
        \coordinate (A2) at (-1.5,0);
        \coordinate (B2) at (2.,0);
        \coordinate (C2) at (2.,3);
        \coordinate (D2) at (-1.5,3);

        \draw[->] (-2.5,0) -- (3,0) node[below] {$\theta_b^{\Q}$};
        \draw[->] (0,-1.5) -- (0,4.) node[left] {$\alpha_b^{\Q}$};

        % Draw the background region for the second Cartesian axis
        \fill[gray, opacity=0.3] (A2) -- (B2) -- (C2) -- (D2) -- cycle;

        % Draw the second generic straight line
        \draw[blue, dashed, thick] (-2.5,0) -- (3,0) node[pos=0.32, below] {$\alpha_b^\Q=0$};
        \draw[blue, dashed, thick] (-2.5,3) -- (3.,3) node[pos=0.64, above] {$\alpha_b^\Q=\overline{\alpha_b^\Q}$};
        \draw[red, dashed, thick] (-1.5,-1.5) -- (-1.5,4.) node[pos=0.6, left] {$\theta_b^\Q=\underline{\theta_b^\Q}$};
        \draw[red, dashed, thick] (2,-1.5) -- (2,4.) node[pos=0.4, right] {$\theta_b^\Q=\overline{\theta_b^\Q}$};
    \end{scope}

    % Define the points for the background region in the first Cartesian axis
    \coordinate (A1) at (0,-1);
    \coordinate (B1) at (-2.2,3);
    \coordinate (C1) at (1.38,3);

    % Draw the first Cartesian axis
    \draw[->] (-2.5,0) -- (3,0) node[below] {$\lambda_0$};
    \draw[->] (0,-1.5) -- (0,4) node[left] {$\lambda_1$};

    % Draw the background region for the first Cartesian axis
    \fill[gray, opacity=0.3] (A1) -- (B1) -- (C1) -- cycle;

    % Draw the first generic straight line
    \draw[blue, dashed, thick] (-2.5,-1) -- (3,-1) node[above, pos=0.8] {$\lambda_1=\underline{\lambda_1}$};
    \draw[blue, dashed, thick] (-2.5,3) -- (3,3) node[pos=0.25, above] {$\lambda_1=\overline{\lambda_1}$};
    \draw[red, dashed, thick] (0,-1) -- (-2.5,3.6) node[pos=0.37, left] {$\lambda_0=\underline{m}\lambda_1 + \underline{q}$};
    \draw[red, dashed, thick] (0,-1) -- (1.7,4) node[pos=0.55, right] {$\lambda_0=\overline{m}\lambda_1 + \overline{q}$};

\end{tikzpicture}
    \caption{\footnotesize Domain restriction. Left: Restriction based on the parameters $(\theta_b^\Q, \alpha_b^\Q)$. Right: implied search domain in $(\lambda_0,\lambda_1)$.}
    \label{fig: SearchDomain}
\end{figure}

Once the search domain is selected, the optimization in \Cref{eq: MinMaxProblem} is based on the observation that, given a differentiable objective function, (internal) solutions to the min-max problem $\eqref{eq: MinMaxProblem}$ must satisfy\footnote{This is a necessary, yet not sufficient, condition!} the first order conditions:
\begin{equation}
\begin{aligned}
    \label{eq: FirstOrderConditionsMinMax}
    \nabla_\w \L(\w,\lambda)&=0,\\
    \nabla_\lambda \L(\w,\lambda)&=0,
\end{aligned}
\end{equation}
with $\nabla_\w$ as in \eqref{eq: FirstOrderConditionsMin} and $\nabla_\lambda$ the gradient operator w.r.t. the market price of risk.

We consider the problem with the quadratic loss function in \Cref{eq: QuadraticLoss}. Under this assumption, the first-order conditions in \eqref{eq: FirstOrderConditionsMinMax} are rewritten as:
\begin{align}
    \label{eq: FirstOrderConditionsMinMaxWeights}
    &\w(\lambda)=X^{-1}\y(\lambda),\\
    \label{eq: FirstOrderConditionsMinMaxMarket}
    &\nabla_\lambda z (\lambda) - 2 \big[\nabla_\lambda \y(\lambda)\big] X^{-1} \y(\lambda)=0,
\end{align}
where $\nabla_\lambda \y(\lambda)$ is the Jacobian of $\y(\lambda)$. The solutions of \eqref{eq: FirstOrderConditionsMinMaxMarket}, combined with \eqref{eq: FirstOrderConditionsMinMaxWeights}, provide a domain of search for the actual min-max problem. In particular, saddle-points are the pairs $(\w^*,\lambda^*)$ where $\L$ is convex in $\w$ and concave in $\lambda$. This can be checked by inspecting the Hessian matrix of $\L$. The above routine allows for identifying the saddle-points within the domain of interest. An analysis of the search domain boundary is required to obtain all the significant solutions of \eqref{eq: MinMaxProblem} on a bounded domain.

\section{Numerical experiments}
\label{sec: NumericalResults}

\begin{table}[b]
\begin{minipage}{0.45\textwidth}
\centering
\caption{Parameters for SDEs in \Cref{eq: RiskNeutralDynamicsRate} and \Cref{eq: RealWorldDynamicsPrepayment}, respectively.}
\begin{tabular}{c||c|c|c||c}
\toprule
\multirow{2}{*}{$r$} & $\alpha_r^{\Q}$ & $\vartheta_r^{\Q}(t)$ & $\sigma_r$ & \multirow{2}{*}{$\rho$}\\
&  $0.023$ & $YC\equiv3\%$ & 0.006 & \\
\cmidrule(lr){1-5}
\multirow{2}{*}{$b$}  & $\alpha_b^{\P}$ & $\theta_b^{\P}$ & $\sigma_b$ & \multirow{2}{*}{$0.44$}\\
& $2.099$ & $-0.002$ & $0.015$ & \\
\bottomrule
\end{tabular}

\label{tab: SDEsParameters}
\end{minipage}
\hspace{0.05\textwidth} % Adjust the space between the tables
\begin{minipage}{0.45\textwidth}
\centering
\caption{Parameters for the sigmoid rate incentive function in \Cref{eq: CPRSigmoid}.}
\begin{tabular}{c||c|c|c}
\toprule
$h_{RI}$ & $l$ & $u$ & $a$ \\
\cmidrule(lr){1-4}
empirical & $0.0231$ & $0.0447$ & $84$ \\
rational & $0.0$ & $0.0447$ & $+\infty$ \\
\bottomrule
\end{tabular}
\label{tab: SigmoidParameters}
\end{minipage}
\end{table}

In this section, we report some numerical experiments based on the model described in \Cref{sec: PrepaymentModel}, implemented according to the methodology and assumptions of \Cref{sec: NumericalPricingReplication}.

The experiments are run considering the following setup. For \Cref{eq: RiskNeutralDynamicsRate}, we set $\alpha_r^\Q=0.023$ and $\sigma_r=0.006$. These choices correspond to realistic coefficients for the Hull-White model, obtained from calibration on a market-realized swaption volatility cube. $\vartheta_r^\Q(t)$, defined as in \citep{hull1990pricing} so to recover today's market zero-coupon bond curve, takes as input a flat yield curve, $YC$, to the level $YC\equiv 3\%$). The coefficients in \Cref{eq: RealWorldDynamicsPrepayment} are computed with MLE and are based on a time series for $b(t)$ calibrated on realistic prepayment data (see \Cref{fig: Sigmoid}). The coefficients are $\alpha_b^\P=2.099$, $\theta_b^\P=-0.002$ and $\eta_b=0.015$. The correlation coefficient equals $\rho=0.44$ (see \Cref{tab: SDEsParameters}). 
The structural parameters of the sigmoid rate incentive function, $h_{RI}$, in \Cref{eq: CPRSigmoid} are also calibrated based on the data represented in \Cref{fig: Sigmoid} and set to $l=0.0231$, $u=0.0447$, and $a=84$. For the sake of comparison, we also consider a rational rate incentive obtained by setting the lower bound $l=0.0$ and the steepness parameter $a\to+\infty$ (see \Cref{tab: SigmoidParameters}). 

In \Cref{sssec: CheapPriceEPO,ssec: ExposureReplicationSimple}, the market price of risk for $b(t)$ is assumed to be zero, $\lambda(t)=0$. In other words, we perform the pricing experiment under the assumption that the risk-neutral dynamics in \Cref{eq: RiskNeutralDynamicsPrepayment} coincide with the real-world dynamics in \Cref{eq: RealWorldDynamicsPrepayment}, as e.g. is done in \citep{aid2009structural} in the context of the energy market. The effect of different choices of the market price of risk is investigated in \Cref{sssec: EffectLambda,ssec: ExposureReplicationRobust}.

For reference, the initial notional of the mortgage contract is set to $N_{\tt{c},0}=10^4$, and hence the EPO value is reported and visualized in basis points (bps) of the initial notional. Furthermore, we consider contracts with tenor from $0$ to $10$ years, with one single payment per year and a fixed rate of $3.1\%$. The amortization scheme used in the experiments is either interest-only (also said bullet), or linear. The details are reported in \Cref{tab: EPOSpecs}.

\begin{table}[t]
\begin{minipage}{1.\textwidth}
\centering
\caption{Mortgage contract specification.}
\begin{tabular}{c|c|c|c|c}
\toprule
notional, $N_{\tt{c},0}$ & fixed rate, $K$ & tenor, $(t_m,t_n)$ & freq. (yearly) & amortization\\
\cmidrule(lr){1-5}
$10^4$ & $3.1\%$ & $(0, 10)$& $1$ & bullet/linear\\
\bottomrule
\end{tabular}
\label{tab: EPOSpecs}
\end{minipage}
\end{table}

\subsection{Pricing of the EPO}
\subsubsection{Effect of uncertainty on the EPO price}
\label{sssec: CheapPriceEPO}

In this experiment, we want to test the statement in \Cref{thm: DecreasingPrepaymentPrice}. Furthermore, we observe similar qualitative behavior in more general cases (even if it is not guaranteed by \Cref{thm: DecreasingPrepaymentPrice}).

For illustration purposes, we consider two mortgage contracts: a 10-year interest-only (bullet) and a 10-year linear mortgage with annual interest payments and notional repayments. The number of payment dates is $n=10$, and they fall at times $t_j=j$ years, for $j=1,\dots,n$. We observe the effect on the EPO value when considering both a rational rate incentive function and an empirical one with parameters given in \Cref{tab: SigmoidParameters}. The two RI functions are illustrated in \Cref{fig: Sigmoid}a: dashed red and solid blue lines represent the rational and empirical RI functions, respectively. Empirical evidence shows a systematic minimum prepayment level, i.e. $l>0$, even when the rate incentive $\varepsilon$ is negative. This is a major difference between the empirically calibrated sigmoid and the rational rate incentive function.

First, the claims of \Cref{thm: DecreasingPrepaymentPrice} and \Cref{cor: CheapPrepaymentPrice} are confirmed numerically for a bullet contract, setting $\rho=0$ and $\theta_b^\Q=0$.
In \Cref{fig: CheapPrice}a, the solid black line represents the price of the EPO for a bullet mortgage with a rational rate incentive function and parameters $\rho=0$ and $\theta_b^\Q=0$. The line is decreasing in $\eta_b$ and maximum for $\eta_b=0$. The case $\eta_b = 0$ corresponds to the model introduced by \citep{casamassima2022pricing}.
A similar qualitative behavior is observed when the realistic $\rho$ and $\theta_b^\Q$ are used, even for different amortization schemes. Dotted-solid blue and dotted-dashed red lines in \Cref{fig: CheapPrice}a are the EPO values for bullet and linear amortization mortgage, respectively. The effect of prepayment is less significant the faster the contractual amortization scheme is, hence, the value of the prepayment option is lower for a linear mortgage than for a bullet.

\begin{figure}[b!]%%
    \centering
    \subfloat[\centering]{{\includegraphics[width=7.5cm]{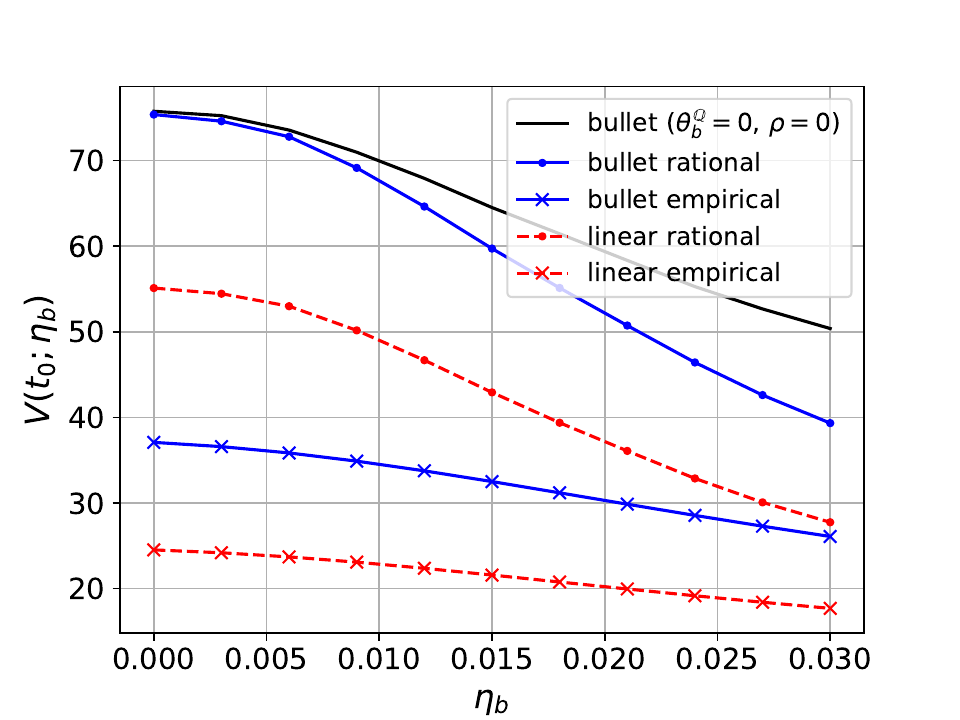} }}
    ~\hspace{-.5cm}
    \subfloat[\centering]{{\includegraphics[width=7.5cm]{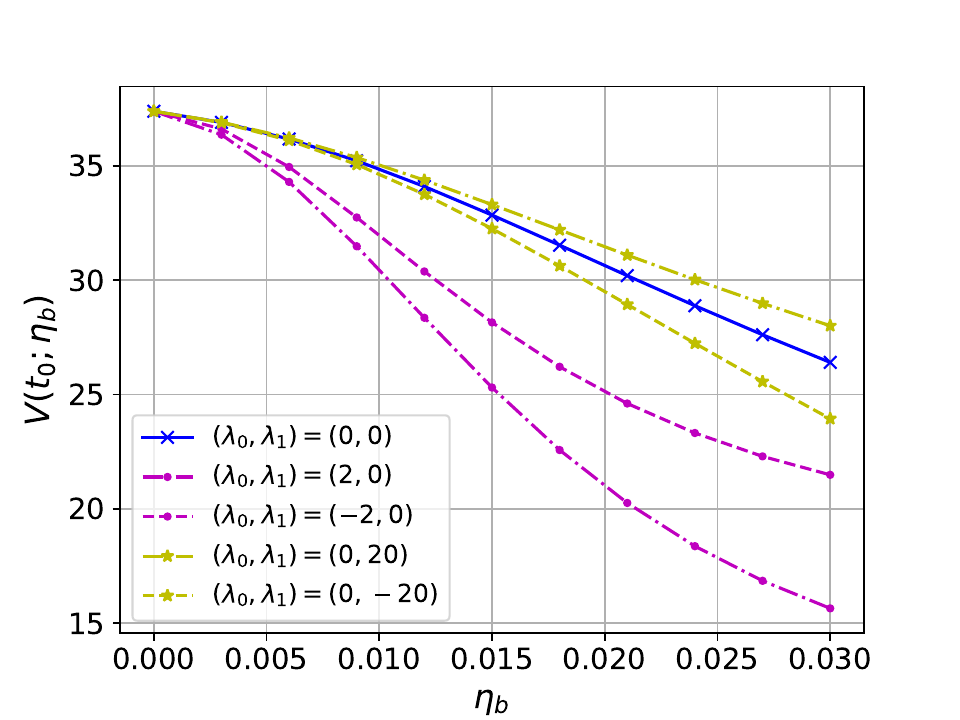} }}%
    \caption{(a) EPO value $V(t_0)\equiv V(t_0;\eta_b)$ plotted against the volatility parameter $\eta_b$ for different amortization schemes and rate incentive functions. (b) EPO value $V(t_0)\equiv V(t_0;\eta_b)$ for a bullet mortgage with empirical rate incentive function for different assumptions of the market price of risk $(\lambda_0,\lambda_1)$.}
    \label{fig: CheapPrice}%
\end{figure}

The experiment is repeated for the empirically calibrated sigmoid function. The qualitative behavior is consistent with the previous experiments. 
However, the reduced magnitude in the EPO value entails a less evident effect of uncertainty.
We observe two causes. On the one hand, a recurrent non-zero prepayment for $\varepsilon < 0$ (see \eqref{eq: RateIncentive}) entails that some prepayments are exercised even when it is inconvenient to prepay; on the other hand, the continuous transition from a negative to a positive RI represents a delayed prepayment even when a positive $\varepsilon$ is realized. 
In \Cref{fig: CheapPrice}a, the EPO values for bullet and linear contracts are represented by crossed-solid blue and crossed-dashed red lines, respectively.

\subsubsection{Effect of the market price of risk on the EPO price}
\label{sssec: EffectLambda}
In \Cref{fig: CheapPrice}b, we present the effect of different assumptions of the market price of risk $\lambda(t)=\lambda_0 + \lambda_1 b(t)$ on the EPO value of a 10-year bullet mortgage when the empirical rate incentive function is employed. We compare the results against the solid-crossed blue line (also in \Cref{fig: CheapPrice}a) representing the case with $\lambda_0=\lambda_1=0$. From a qualitative level, variations of $\lambda_0$ (magenta lines in \Cref{fig: CheapPrice}b) give rise to a reduction of the EPO value. 
Indeed, $\lambda_0\neq 0$ entails a long-term horizontal shift of the rate incentive function, representing a suboptimal prepayment exercise. The effect of $\lambda_1$ on the EPO value is different and depends on the sign of $\lambda_1$. Observing that the mean-reversion rate, $\alpha_b^\Q$, of $b(t)$ is monotonically increasing in $\lambda_1$ (see \eqref{eq: RiskNeutralOUParams}) and has an opposite effect to the diffusion parameter, $\eta_b$, the EPO value is monotonically increasing in $\lambda_1$. Confirmation of this fact is given by the yellow lines in \Cref{fig: CheapPrice}b.

It is worth noticing that the sensitivity of the EPO value to the market price of risk exposes any financial institution to the risk of model misspecification whenever a pricing task is required. As we illustrated above, an incorrect assumption of the market price of risk may lead to a significant mispricing error, that is amplified by a higher volatility, $\eta_b$. In \Cref{ssec: ExposureReplicationRobust}, we show an example of a replication strategy that is less sensitive to the market price of risk misspecification.

\subsection{Conditional exposure replication}
\label{ssec: ExposureReplicationSimple}

In this section, we show how the EPO exposure of a 10-year bullet mortgage with yearly interest payments is replicated when the market price of risk is assumed to be known, and we investigate the effect of using different tradable instruments for this purpose. In this experiment, the parameters of the risk factor dynamics are fixed as in \Cref{tab: SDEsParameters}, while the sigmoid parameters are given in \Cref{tab: SigmoidParameters}. The market price of risk is assumed $\lambda(t)=0$ for every $t\in\T$.

For the sake of illustration, we report the results using as a metric the \emph{integrated distance} for the allocation strategy $\w$, which reads:
\begin{equation*}
    \ID(\w)=\int_{t_0}^{t_n} \D(t;\w) \d t,
\end{equation*}
for the distance $\D$, defined in \Cref{eq: SignedDistance}. Different choices of allocation strategies will be reported as arguments of $\ID$. In particular, $\ID(\0)$ is the integrated distance between the value in EPO and a zero hedge, i.e. it is the integrated EPO exposure. 
\begin{table}[b]
\begin{minipage}{1.\textwidth}
\centering
\caption{Replicating instruments specification.}
\begin{tabular}{c||c|c|c|c}
\toprule
instrument & fixed rate, $K$ & tenor, $(t_m,t_n)$ & freq. (yearly) & maturity\\
\cmidrule(lr){1-5}
rec. swap & $3.0\%$ (par) & $(0, 10)$& $1$ & /\\
rec. swaption & $3.0\%$ (ATM) & $(9, 10)$& $1$ & $9$\\
pay. swaption & $3.0\%$ (ATM) & $(9, 10)$& $1$ & $9$\\
\bottomrule
\end{tabular}
\label{tab: HedgeSpecs}
\end{minipage}
\end{table}

\subsubsection{Nonlinear replication of the EPO exposure}
\begin{table}[t]
\begin{minipage}{1.0\textwidth}
\centering
\caption{Optimal allocation and mean squared error loss for different replicating portfolios.}
\begin{tabular}{c||c|c|c|c|c|c|c|c}
\toprule
& \multicolumn{8}{c}{replicating portfolio}\\
& $\0$ & $\w_{L2,1}^*$ & $\w_{L2,2}^*$ & $\w_{L2,3}^*$ & $\w_{L2,4}^*$ & $\w_{L2,5}^*$ & $\w_{L2,6}^*$ & $\w_{L2,7}^*$\\
\cmidrule(lr){1-9}
rec. swap & / & 2066 & / & / & 1677 & 2326 & / & 1528 \\
rec. swaption & / & / & 15180 & / & 5970 & / & 16747 & 6976\\
pay. swaption & / & / & / & -8225 & / & 3857 & -10513 & -1244 \\
\cmidrule(lr){1-9}
$\L_{M^2}$ & 267830 & 19616 & 108251 & 218174 & 3486 & 11838 & 24868 & 3138 \\
$\L_{M^2}$ $(\%)$ & 100\% & 7.32\% & 40.42\% & 81.46\% & 1.30\% & 4.42\% & 9.28\% & 1.17\% \\
\cmidrule(lr){1-9}
initial cost & / & 0 & 75 & -41 & 29 & 19 & 31 & 28 \\
\bottomrule
\end{tabular}
\label{tab: OptimalReplicatingPortfolios}
\end{minipage}
\end{table}

In this experiment, the objective function is of the type given in \Cref{eq: LossObjectiveMoments}, with $p=2$, hence it is used to minimize the mean squared distance between the EPO and the hedge.
We test the optimal replication for different choices of replicating instruments. Particularly, we consider a 10-year receiver swap at par (fixed rate $K=3\%$), a 9-year maturity 1-year tenor receiver swaption, and a 9-year maturity 1-year tenor payer swaption, both at the money (strike $K=3\%$). The detailed specifications for the hedging instruments are given in \Cref{tab: HedgeSpecs}. We indicate the optimal allocation with $\w_{L2,i}^*$, where $i$ indicates $i$-th subset of instruments used in the replication. For instance, $i=1$ indicates the replicating portfolio composed by the swap only, while $i=4$ indicates the portfolio composed by the swap and the receiver swaption. With $\0$, we indicate the trivial no-action strategy, where no instruments are bought or sold for hedging purposes. All the labels for the different replicating portfolios are given in \Cref{tab: OptimalReplicatingPortfolios}. \Cref{tab: OptimalReplicatingPortfolios} also reports the notional invested in each of the instruments, the absolute and relative losses, and the initial cost of hedging. The relative loss is defined as a percentage of the loss given no action, i.e. the absolute loss under the strategy $\0$.

As reported in \Cref{tab: OptimalReplicatingPortfolios}, the strategy $\w_{L2,1}^*$ with a long position in the receiver swap reduces $\L_{M^2}$ of more than $90\%$. However, the skew of the EPO exposure (red histogram in \Cref{fig: ExposureDistribution}a) is more pronounced (towards the right tail) than the swap case. In the exposure of the hedged position, we observe a fat right tail (green histogram). The replication based on solely the receiver or the payer swaption (resp. strategies $\w_{L2,2}^*$ and $\w_{L2,3}^*$ in \Cref{tab: OptimalReplicatingPortfolios}) performs rather poorly since the nonlinear instruments can only capture either the right tail (yellow histogram in \Cref{fig: ExposureDistribution}b) or the left tail (cyan histogram) of the exposure, respectively.

\begin{figure}[b]
\centering
    \subfloat[\centering]{{\includegraphics[width=7.5cm]{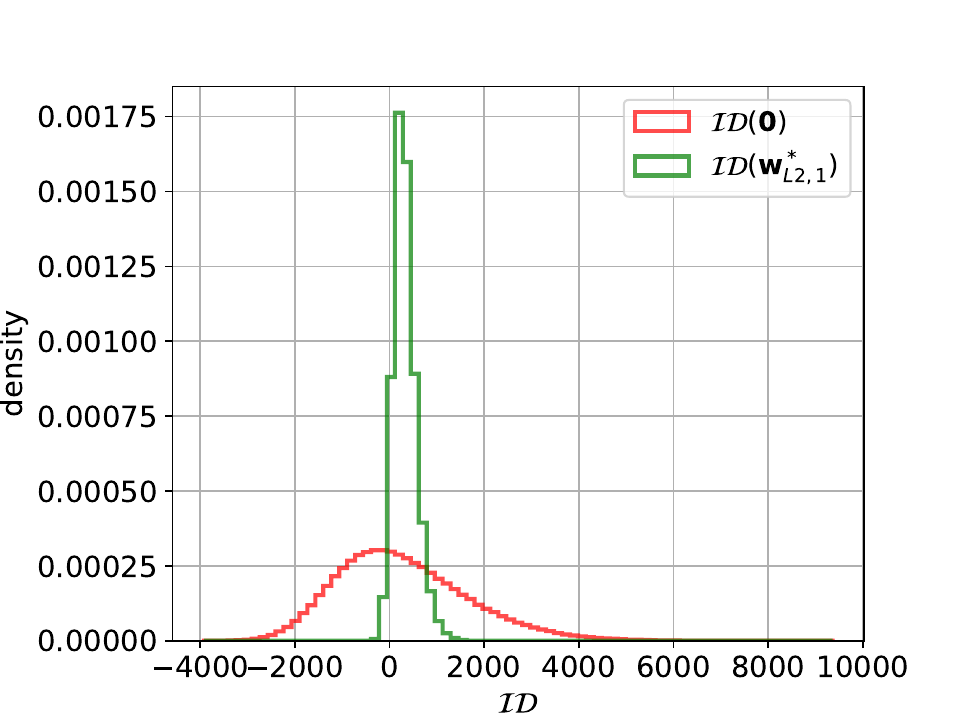} }}%
    ~\hspace{-.6cm}
    \subfloat[\centering]{{\includegraphics[width=7.5cm]{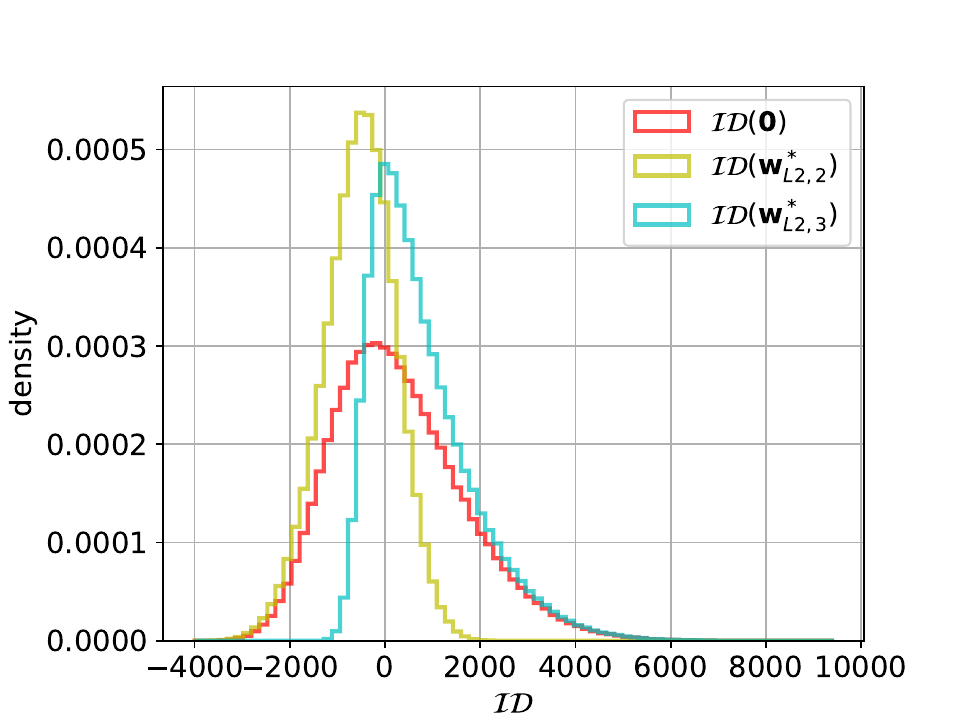} }}
    \caption{Integrated EPO exposure distribution, $\ID(\0)$, in red, compared with the integrated distance, $\ID(\cdot)$, for different hedging strategies: (a) $\w^*_{L2,1}$, in green; (b) $\w^*_{L2,2}$ in yellow and $\w^*_{L2,3}$ in cyan.}%
    \label{fig: ExposureDistribution}%
    \vspace{-0.3cm}
\end{figure}
\begin{figure}[t!]%%
    \centering
    \subfloat[\centering]{{\includegraphics[width=7.5cm]{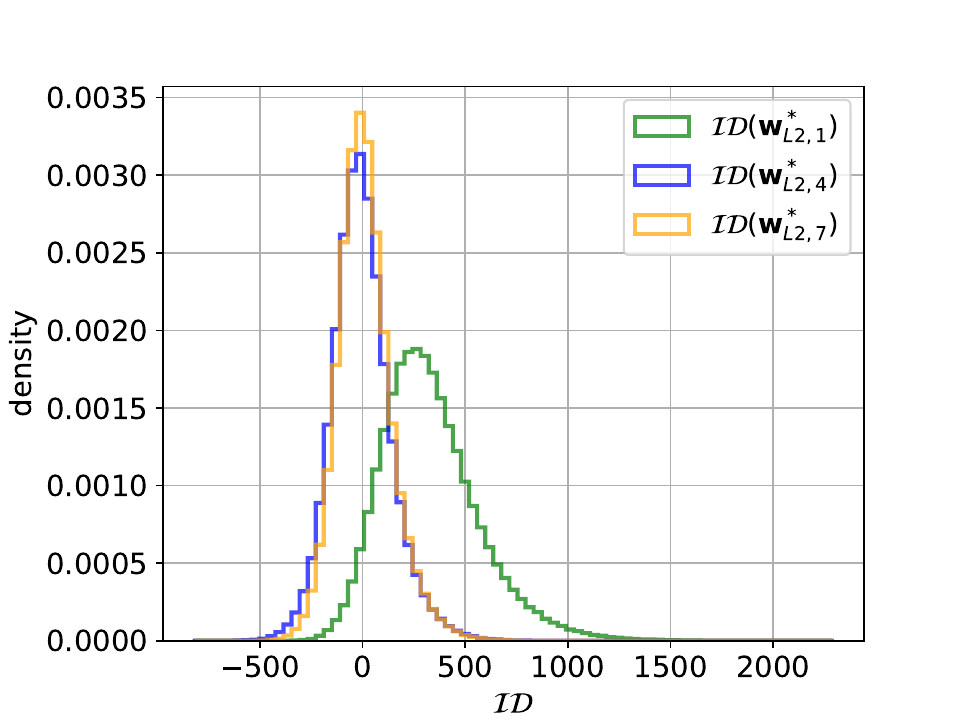} }}%
    ~\hspace{-.6cm}
    \subfloat[\centering]{{\includegraphics[width=7.5cm]{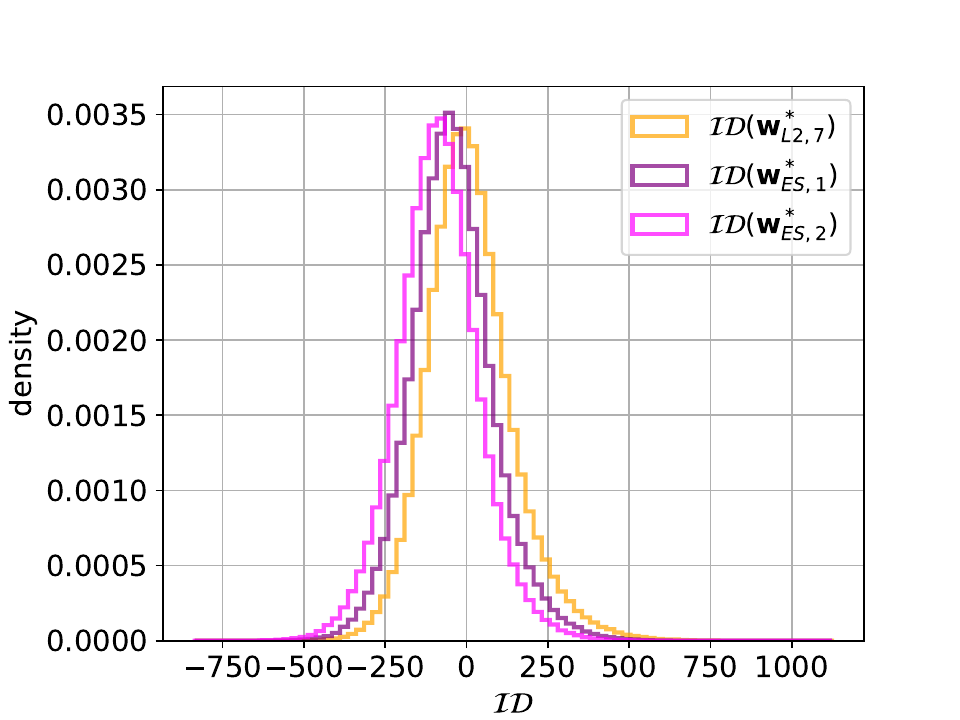} }}
    \caption{Integrated distance distribution, $\ID$, for different strategies and objective functions. (a) Loss function as in \Cref{eq: LossObjectiveMoments} and strategies $\w_{L2,1}^*$ (in green),  $\w_{L2,4}^*$ (in blue), and  $\w_{L2,7}^*$ (in yellow). (b) Loss function as in \Cref{eq: LossObjectiveGeneral}, for $k=0$ ($\w_{L2,7}^*$, in yellow), $k=10$ ($\w_{ES,1}^*$, in purple), $k=20$ ($\w_{ES,2}^*$, in magenta).}
    \label{fig: ExposureDistributionMultipleInstruments}%
\end{figure}

\Cref{tab: OptimalReplicatingPortfolios} shows that when the nonlinear instruments are combined with the swap ($\w_{L2,4}^*$ and $\w_{L2,5}^*$) the replication improves because of the asymmetry in the swaptions' exposure, with the best performance when the receiver swaption is included in the strategy. Using only the nonlinear instruments, as in $\w_{L2,6}^*$, leads to an inaccurate replication, confirming that an important component of the EPO exposure resembles a linear instrument. A huge notional is invested in the two swaptions (with opposite signs) achieving an effect similar to a receiver swap, using a put-call parity argument.

The best replication is obtained using the swap and the swaptions (in the strategy $\w_{L2,7}^*$). The positive notional invested in the receiver swap represents the linear component in the EPO, the positive notional invested in the receiver swaption and the negative notional invested in the payer swaption help control the exposure's tails and skew. The different magnitudes indicate that most of the optionality resembles a receiver swaption. In \Cref{fig: ExposureDistributionMultipleInstruments}a, we show that $\w_{L2,4}^*$ (blue histogram) and $\w_{L2,7}^*$ (orange histogram) significantly reduce the right tail of the exposure, compared to $\w_{L2,1}^*$ (green histogram). The exposure distribution resulting from $\w_{L2,7}^*$ is slightly more peaked compared to $\w_{L2,4}^*$ since the additional payer swaption reduces the left tail of the EPO exposure.

In general, we observe positive notional investment in the receiver swap and swaption, while negative notional for the payer swaption. The long linear instrument is used to hedge the center of the EPO exposure, whereas the swaptions are used to hedge the tails. The more pronounced right tail in the EPO exposure entails a higher notional in the receiver swaption compared to the payer swaption. The only exception is $\w_{L2,5}^*$. Here, the right tail of the exposure is replicated by the receiver swap, while the effect of the payer swaption is negligible. The left tail of the exposure, on the contrary, is obtained as the combination of the swap left tail, which is ``too negative,'' and the payer swaption right tail, which is positive and compensates for that.

From a practical perspective, we can use the nonlinear hedging strategies computed above to significantly improve the EPO hedge. For instance, with an investment of 28 bps of the mortgage portfolio notional in swaptions, we can reduce the cumulated exposure as defined in \eqref{eq: LossObjectiveMoments} more than six times compared to the hedge based on sole swaps (see \Cref{tab: OptimalReplicatingPortfolios}).

\subsubsection{Tail replication of the EPO exposure}

\begin{table}[b]
\begin{minipage}{1.\textwidth}
\centering
\caption{Optimal allocation for loss function including the expected shortfall to the 90\% level, for different choices of $k=0,10,20$.}
\begin{tabular}{c||c|c|c}
\toprule
& \multicolumn{3}{c}{replicating portfolio}\\
 & $\w_{L2,7}^*$ $(k=0)$  & $\w_{ES,1}^*$ $(k=10)$ & $\w_{ES,2}^*$ $(k=20)$\\
\cmidrule(lr){1-4}
rec. swap & 1528 & 1456 & 1427\\
rec. swaption & 6976 & 7928 & 8522 \\
pay. swaption & -1244 & -1242 & -1050 \\
\cmidrule(lr){1-4}
$\L_{M^2}$ & 3138 & 3456 & 4285 \\
$\L_{ES^+_{0.9}}$ & 277 & 201 & 145 \\
\cmidrule{1-4}
initial cost & 28 & 33 & 37\\
\bottomrule
\end{tabular}
\label{tab: OptimalReplicatingPortfoliosES}
\end{minipage}
\end{table}

In this experiment, we test a different loss function, of the kind given in \Cref{eq: LossObjectiveGeneral} with the expected shortfall to the 90\% level, for different choices of the tuning parameter $k$. The hedging instruments are selected as in the previous experiment. The optimal allocation is indicated by $\w_{ES,1}^*$ and $\w_{ES,2}^*$, for $k=10,20$, respectively. The results are benchmarked with $\w_{L2,7}^*$ which corresponds to the optimal strategy obtained setting $k=0$. In \Cref{tab: OptimalReplicatingPortfoliosES}, we report the different optimal strategies. The allocation in the receiver swap decreases when $k$ increases. However, the change in allocation is more pronounced for the nonlinear instruments. The notional invested in the receiver (resp. payer) swaption increases (resp. decreases) for an increasing $k$. This is consistent with the expectation that nonlinear instruments are mainly responsible for the replication in the tails of the EPO exposure.

From \Cref{fig: ExposureDistributionMultipleInstruments}b and \Cref{tab: OptimalReplicatingPortfoliosES}, we observe that including the expected shortfall in the objective shifts the distribution to the left, allowing the control of extreme scenarios. The potential losses are reduced, but the hedging cost increases, when $k$ increases. Parameter $k$ needs to be tuned to achieve satisfactory control on the right tail and a tolerable price for the hedging. 
For instance, considering as benchmark strategy $\w_{L2,7}$ and investing 5 additional bps (about 18\% of $\w^*_{L2,7}$ value) to achieve $\w^*_{ES,1}$, we may reduce $\L_{ES^+_{0.9}}$ -- a measure of the cumulated expected shortfall over time (see \Cref{eq: LossObjectiveExpectedShortfall}) -- by more than 27\%, obtaining a robust control on the right tail of the exposure. Besides the additional cost, however, $\w^*_{ES,1}$ performs less accurately in the middle and left tail of the exposure, as we deduce by an increase of about 10\% of $\L_{M^2}$. A similar argument holds for $\w^*_{ES,2}$. The results are reported in \Cref{tab: OptimalReplicatingPortfoliosES}.

\subsection{Unconditional exposure replication}
\label{ssec: ExposureReplicationRobust}

This section is dedicated to studying the effect of different choices of the market price of risk, $\lambda$, in the exposure replication (see \Cref{sssec: RobustReplication}). For the sake of this experiment, we consider a 10-year bullet mortgage with yearly interest payments. The parameters of the risk factors dynamics are given in \Cref{tab: SDEsParameters}, and those of the empirical sigmoid are reported in \Cref{tab: SigmoidParameters}.

\begin{figure}[b!]
    \centering
    \includegraphics[width=1.\textwidth]{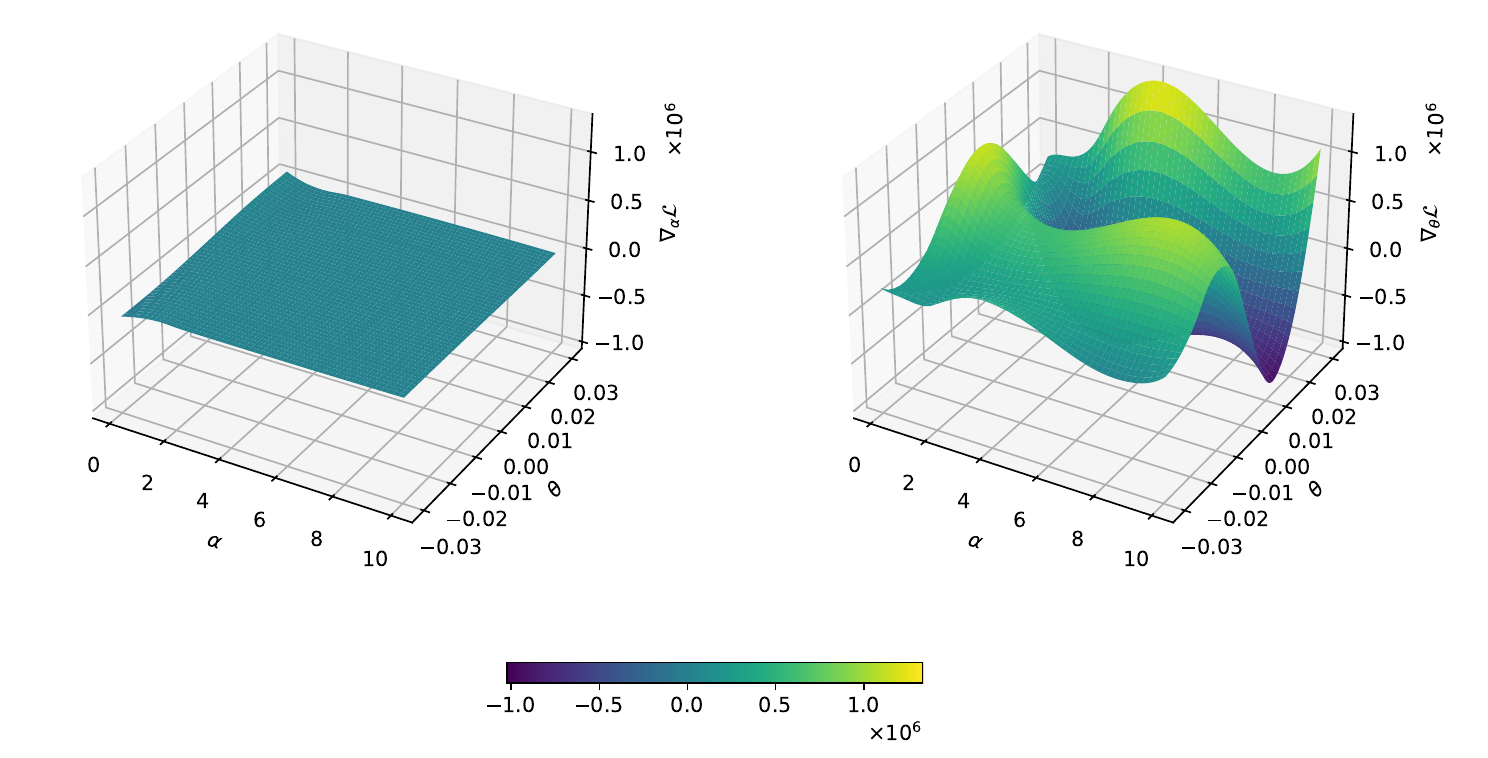}
    \caption{Gradient of the loss function on the subspace spanned by \eqref{eq: FirstOrderConditionsMinMaxWeights}. Left: partial derivative w.r.t. $\alpha$. Right: partial derivative w.r.t. $\theta$.}
    \label{fig: Gradient}
\end{figure}

\begin{figure}[t!]
    \centering
    \includegraphics[width=1.\textwidth]{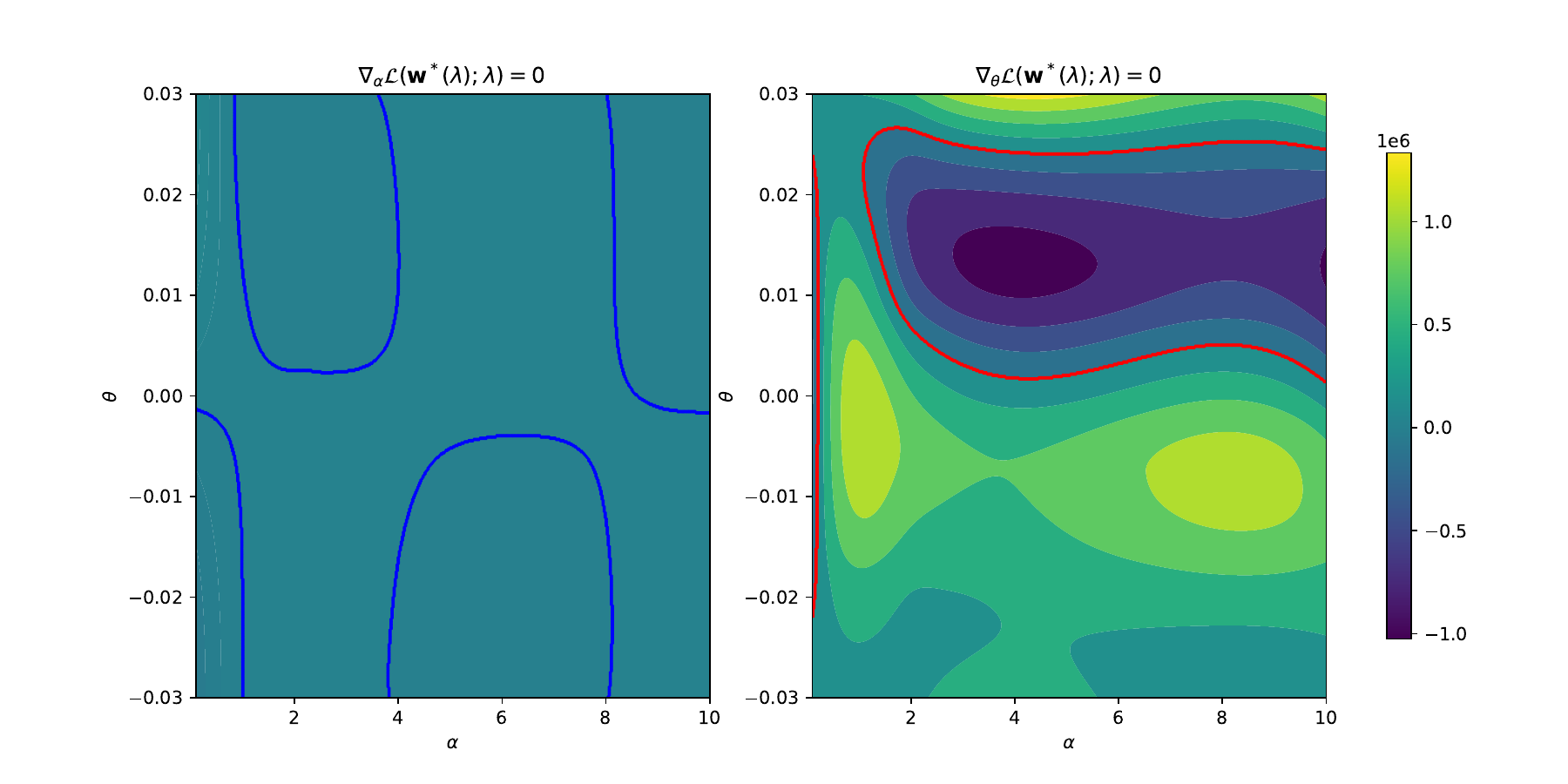}
    \caption{Contour plot of the gradient in \Cref{fig: Gradient}. The set $\{\lambda:\:\nabla_\alpha \L(\w^*(\lambda);\lambda)=0\}$ is represented with a blue line. The set $\{\lambda:\:\nabla_\theta \L(\w^*(\lambda);\lambda)=0\}$ is represented with a red line.}
    \label{fig: GradientProjection}
\end{figure}

Starting from the $\P$-dynamics for $b$ in \Cref{tab: SDEsParameters}, we set the bounds on the search region for the market price of risk implied by the parameter ranges $\alpha_b^\Q\in[0.1,10.0]$ and $\theta_b^\Q\in [-0.03, 0.03]$. 
The loss function considered is defined in \Cref{eq: LossObjectiveMoments}, for $p=2$. The hedging instrument is the receiver swap from \Cref{tab: HedgeSpecs}.

Notation-wise, $\alpha_b$ and $\theta_b$ are just indicated with $\alpha$ and $\theta$, in equations and figures. Furthermore, the market price of risk can be parameterized in $\alpha_b$ and $\theta_b$: we will refer to the pair $(\alpha_b, \theta_b)\equiv(\alpha,\theta)$ as ``the market price of risk,'' and we will indicate the partial derivatives ``w.r.t. the market price of risk'' with $\nabla_\alpha$ and $\nabla_\theta$.

From \eqref{eq: FirstOrderConditionsMinMaxWeights}, observing that $X^{-1}$ is positive definite independently of $\lambda$, we deduce that the loss function is convex along the direction spanned by the allocation variable. In other words, $X^{-1}\y(\lambda)$ is the optimal conditional strategy $\w^*(\lambda)$. As a necessary condition, the candidate saddle-points must satisfy \eqref{eq: FirstOrderConditionsMinMaxMarket}. Since $\y(\alpha,\theta)$ and $z(\alpha,\theta)$ are not explicitly known, we compute them on a set of nodal points via Monte Carlo simulation. The grid of nodal points is interpolated by a $(3,3)$-degree bivariate spline, that is employed to compute the numerical solution of \eqref{eq: FirstOrderConditionsMinMaxMarket}.

\begin{table}[b]
\begin{minipage}{1.\textwidth}
\centering
\caption{Solutions of the robust hedging problem.}
\begin{tabular}{c||c|c|c|c}
\toprule
$(\alpha,\theta)$ & $(8.25,0.005)$ & $(8.10,0.03)$ & $(0.90,0.03)$ & $(0.10,-0.025)$\\
saddle & YES & NO & NO & NO\\
\cmidrule(lr){1-5}
rec. swap & $2182$& $2526$ & $2425$ & $1880$\\
$\L_{M^2}$ & $24623$ & $17363$ &$31525$ & $18423$\\
\bottomrule
\end{tabular}
\label{tab: RobustSolution}
\end{minipage}
\end{table}

\Cref{fig: Gradient} illustrates the two components of the gradient of $\L$ projected on the subspace spanned by \eqref{eq: FirstOrderConditionsMinMaxWeights}, i.e. the gradient assuming the optimal allocation is always ensured. The partial derivative in $\theta$, $\nabla_\theta\L$, shows a clear change in its sign and attains huge -- positive and negative -- values, while the partial derivative in $\alpha$, $\nabla_\alpha\L$, has a much narrower range around zero. So, a change in value along $\theta$ may be more relevant than along $\alpha$.
The solution to \Cref{eq: FirstOrderConditionsMinMaxMarket} is represented in \Cref{fig: GradientProjection}. Blue and red lines indicate the sets $\nabla_\alpha\L(\w^*(\lambda);\lambda)=0$ and $\nabla_\theta\L(\w^*(\lambda);\lambda)=0$, respectively. Inspecting the Hessian matrix at the points fulfilling both first-order conditions, only $(\alpha,\theta)=(8.25,0.005)$ is an admissible solution, i.e. a saddle-point. We search for other solutions on the search domain boundary and report them in \Cref{tab: RobustSolution}. The four pairs of market prices of risk and allocations in \Cref{tab: RobustSolution} are robust solutions to the hedging problem. In these points, any small market price of risk variation (within the search domain) entails reducing exposure. In \Cref{fig: Trajectories}, we observe that for every initial belief regarding the market price of risk (black dots), one of the four solutions in \Cref{tab: RobustSolution} (black `x's) corresponds to the optimal robust hedge that ensures an upper bound on the exposure. Such upper bound is the loss reported in \Cref{tab: RobustSolution}.

\begin{figure}[t]%%
    \centering
    \hspace{0.0cm}
    \subfloat[\centering]{{\includegraphics[width=6.6cm]{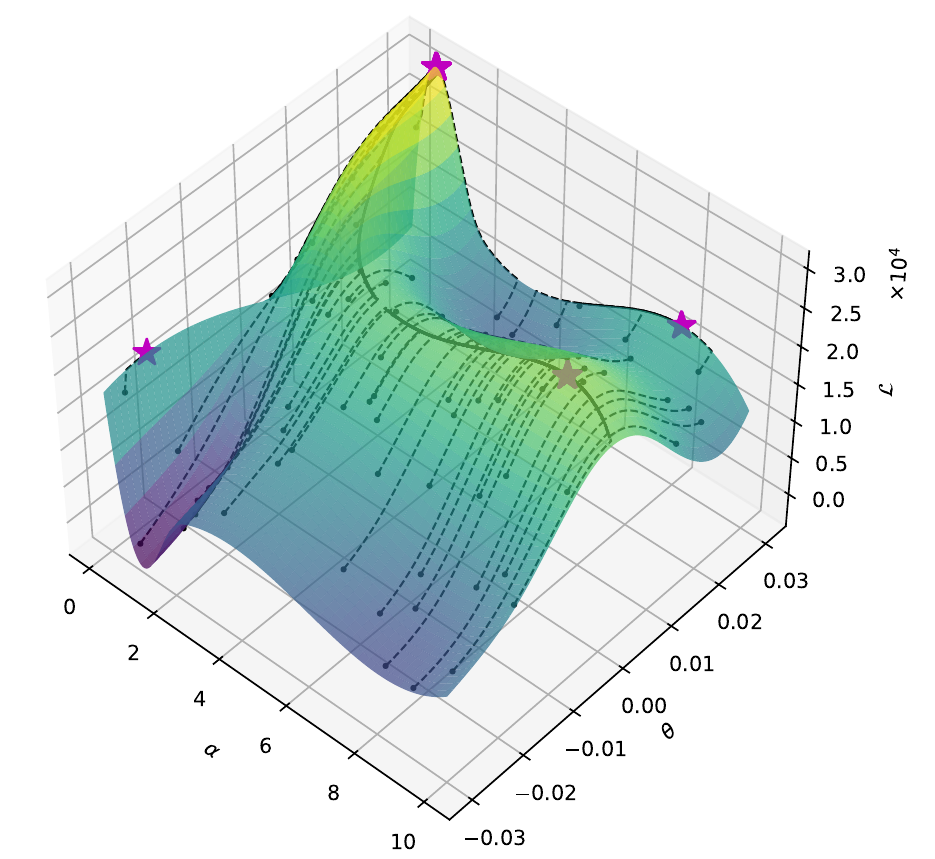} }}%
    ~\hspace{0.0cm}
    \subfloat[\centering]{{\includegraphics[width=7.2cm]{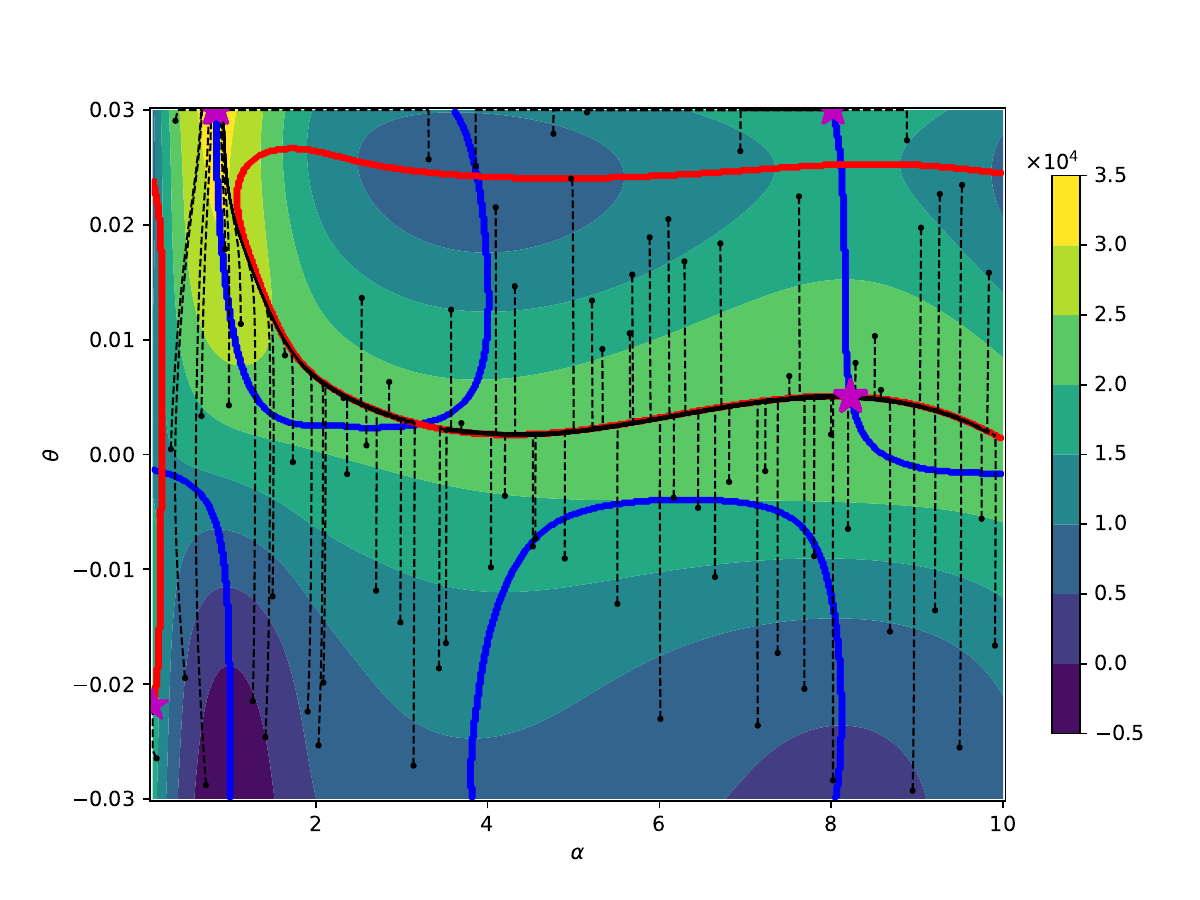} }}\\
    \caption{Left: numerical trajectories of the min-max problem solution (starting point: black dot; final point: magenta `star') plotted on the loss surface. Right: trajectories plotted on the loss contour plot, with the sets $\nabla_\alpha \L(\w^*(\lambda);\lambda)=0$ (in blue) and $\nabla_\theta \L(\w^*(\lambda);\lambda)=0$ (in red).}
    \label{fig: Trajectories}%
\end{figure}

However, in practice, the different magnitudes of $\nabla_\alpha\L$ and $\nabla_\theta\L$ cannot be neglected (see \Cref{fig: Gradient}). In fact, for $\alpha>2.65$ and $\theta<0.025$, the market prices of risk that satisfy $\nabla_\theta \L(\w^*(\lambda);\lambda)=0$ form a set of points with ``semi-robust'' hedge. This is given by the fact that every movement along the direction of $\theta$ leads to a significant decrease in the loss function, while movements along $\alpha$ might increase the loss, but the amount of the variation is negligible compared to the movements along $\theta$. Indeed, even though in such a region $\nabla_\alpha\L(\w^*(\lambda);\lambda)\neq 0$, its magnitude entails only restrained variations (see \Cref{fig: Trajectories}b). The intuition is that we can still build hedges that are more robust than the conditional hedge of the previous section, but the robustness is only against movements of $\theta$.

\section{Conclusion}
\label{sec: Conclusion}

We have proposed a framework for the valuation of the prepayment option embedded in mortgages (EPO). We have introduced a stochastic risk factor that aims to capture behavioral uncertainty. Our model enables us to lower the cost of prepayment providing an advantage to both the financial institution issuing the mortgage and the client buying it. The second stochastic risk factor is non-tradable, hence multiple equivalent martingale measures exist. Particularly, we showed that a misspecification of the market price of risk may have a significant effect on the EPO valuation, and entails a complex hedging problem. We proposed a path-wise replication of the EPO exposure. We solved a conditional hedge, under the assumption of known risk-neutral dynamics. The advantages of including nonlinear instruments in the hedge have been investigated and quantified. We developed numerical experiments to show that the methodology is flexible and can be used to focus on different metrics of interest, such as the expected shortfall (of particular relevance for risk management). We provided a methodology for robust replication. The problem is formulated as a saddle-point problem. Its solutions are robust strategies against misspecification of the market price of risk in the sense that they provide hedging strategies that guarantee a bounded loss in case of changes in the market price of risk.

\section*{Acknowledgments}
L. P. wants to thank the Rabobank ``squads'' \emph{Treasury Analytic and Innovation} and \emph{Prepayment} for their help in developing this research. Special thanks go to Dr. Francois Nissen and Prof. Antoon Pelsser for fruitful discussions and inspiration, and to Mr. Pedro Iraburu for the tireless support provided when dealing with the prepayment data.

\bibliography{Bibliography}

\appendix

\section{Proofs and lemmas}
In this appendix, the proofs of the results are presented.

\subsection{Proof of \texorpdfstring{\Cref{thm: DecreasingPrepaymentPrice}}{}}
\label{app: ProofTheorem}
\begin{proof}[Proof of \Cref{thm: DecreasingPrepaymentPrice}]
\label{proof: Theorem}
Notice that items \ref{enum: InterestOnly} and \ref{enum: NotionalAvailability} ensure that it is not possible to run out of notional prior to the end of the contract. Item \ref{enum: RationalIncentive} forces the client behavior to be digital: either no prepayment occurs or the maximum allowed quantity $u$ is prepaid. From item \ref{enum: PaymentsAtReset}, the payoff in \Cref{def: PrepaymentOptionPayoff} reads:
\begin{equation*}
    H(t_j)=\big(K - F(t_{j-1};t_{j-1},t_j)\big) N(t_{j-1}) \Delta t_j, \qquad j=1,\dots,n,
\end{equation*}
with the stochastic notional given by:
\begin{equation*}
    N(t_{j})=u \sum_{0\leq k\leq j}\1_{\Omega_K}(\kappa(t_k),b(t_k)),\qquad j=0,\dots,n-1,
\end{equation*}
where $\Omega_K=\big\{(x,y)\in\R^2: x+y\leq K\big\}$.

At $t=t_0$, the value function in \eqref{eq: PrepaymentValue} reads:
\begin{equation*}
    \begin{aligned}
        V(t_0)&=\E_{t_0}^\Q\bigg[\sum_{j\geq 1}\frac{M(t_0)}{M(t_j)}\big(K - F(t_{j-1};t_{j-1},t_j)\big)N(t_{j-1}) \Delta t_j\bigg]\\
        &=u\sum_{0\leq k\leq n-1}\E_{t_0}^\Q\bigg[\1_{\Omega_K}(\kappa(t_k),b(t_k))\sum_{k+1\leq j\leq n}\frac{M(t_0)}{M(t_j)}\big(K - F(t_{j-1};t_{j-1},t_j)\big) \Delta t_j\bigg]\\
        &=u\sum_{0\leq k\leq n-1}\E_{t_0}^\Q\bigg[\1_{\Omega_K}(\kappa(t_k),b(t_k))\frac{M(t_0)}{M(t_{k})}A(t_k)\big(K - \kappa(t_k)\big)\bigg].
    \end{aligned}
\end{equation*}
By partitioning $\Omega_K$, for any choice of $k=0,\dots,n-1$, we get:
\begin{equation*}
    \begin{aligned}
        \E_{t_0}^\Q\bigg[\1_{\Omega_K}(\kappa(t_k),b(t_k))\frac{M(t_0)}{M(t_{k})}&A(t_k)\big(K - \kappa(t_k)\big)\bigg]=\\
        &=\E_{t_0}^\Q\bigg[\1_{\Omega_K}(\kappa(t_k),b(t_k))\frac{M(t_0)}{M(t_{k})}A(t_k)\big[K - \kappa(t_k)\big]^+\bigg]\\
        &\qquad-\E_{t_0}^\Q\bigg[\1_{\Omega_K}(\kappa(t_k),b(t_k))\frac{M(t_0)}{M(t_{k})}A(t_k)\big[K - \kappa(t_k)\big]^-\bigg]\\
        &=P(t_0;t_k)\Bigg\{\E_{t_0}^{\Q_{t_k}}\bigg[\1_{\Omega_K}(\kappa(t_k),b(t_k))A(t_k)\big[K - \kappa(t_k)\big]^+\bigg]\\
        &\qquad\qquad\qquad-\E_{t_0}^{\Q_{t_k}}\bigg[\1_{\Omega_K}(\kappa(t_k),b(t_k))A(t_k)\big[K - \kappa(t_k)\big]^-\bigg]\Bigg\},
    \end{aligned}
\end{equation*}
where the last equality is obtained by changing the reference measure to the ${t_k}$-forward measure associated with the num\'eraire $P(t;t_k)$.

When we consider different volatilities $\eta_b$, we can apply \Cref{lem: Lemma} in the previous equation, for every $k=0,\dots,n-1$, setting $Y_\sigma=b(t_k;\eta_b)$, $X=\kappa(t_k)$ (with the probabilistic law given by the suitable ${t_k}$-forward dynamics), and $a=K$. In fact, thanks to item \ref{enum: IndependentB}, $b(t_k;\eta_b)$ is a normal random variable with mean 0 and variance  $\sigma^2=\frac{\eta_b^2}{2\alpha_b^\Q}(1-\e^{-2\alpha_b^\Q (t_k-t_0)})$, independent of $\kappa(t_k)$, and:
\begin{equation*}
    g_{1,2}(\kappa(t_k))=P(t_0;t_k)A(t_k)\big[K - \kappa(t_k)\big]^\pm
\end{equation*}
are deterministic positive functions of $\kappa(t)$ in the domains of interest (i.e., $\kappa(t_k)<K$ and $\kappa(t_k)>K$, respectively).
\end{proof}

\begin{lem}
\label{lem: Lemma}
    Consider real, independent random variables $X$ and $Y_\sigma$, for some parameter $\sigma\geq 0$. With $F_Y(y;\sigma)=\P[Y_\sigma\leq y]$ the CDF of $Y_{\sigma}$, we assume that, for any $\sigma_1<\sigma_2$, it holds that $F_Y(y;\sigma_1)>F_Y(y;\sigma_2)$ for any $y>0$. For $a>0$, we define the domains:
    \begin{equation*}
        \begin{aligned}
            A &= \Big\{(x,y)\in\R^2: x < a,\: y<a-x\Big\},\\
            B &= \Big\{(x,y)\in\R^2: x > a,\: y<a-x\Big\}.
        \end{aligned}
    \end{equation*}
    Given deterministic functions, $g_1(x)$ and $g_2(x)$, such that $g_1(x)>0$ for $x<a$ and $g_2(x)>0$ for $x>a$, then:
        \begin{align}
            \label{eq: ThesisITM}
            \E\big[g_1(X)\1_A(X,Y_{\sigma_1})\big]&>\E\big[g_1(X)\1_A(X,Y_{\sigma_2})\big],\\
            \label{eq: ThesisOTM}
            \E\big[g_2(X)\1_B(X,Y_{\sigma_1})\big]&<\E\big[g_2(X)\1_B(X,Y_{\sigma_2})\big],
        \end{align}
    for any $\sigma_1<\sigma_2$.
\end{lem}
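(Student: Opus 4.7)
\textbf{Proof plan for Lemma \ref{lem: Lemma}.}
The natural approach is to use independence of $X$ and $Y_\sigma$ to collapse the two--dimensional expectations into integrals against the marginal law of $X$, with the CDF of $Y_\sigma$ appearing as an integrand. Concretely, I would condition on $X$ and write
\begin{equation*}
    \E\bigl[g_i(X)\1_{\{\cdot\}}(X,Y_\sigma)\bigr]
    \;=\; \E\Bigl[g_i(X)\1_{\{X\lessgtr a\}}\,\P(Y_\sigma < a-X\mid X)\Bigr]
    \;=\; \E\Bigl[g_i(X)\1_{\{X\lessgtr a\}}\,F_Y\bigl((a-X)^-;\sigma\bigr)\Bigr],
\end{equation*}
for $i=1$ with ``$<a$'' and $i=2$ with ``$>a$'' respectively. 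Because the application deals with a Gaussian $Y_\sigma$ (and more generally whenever $Y_\sigma$ is continuously distributed), I would simplify $F_Y((a-X)^-;\sigma)$ to $F_Y(a-X;\sigma)$. Once this reduction is in place, comparing the two sides for $\sigma_1<\sigma_2$ boils down to a pointwise comparison of the map $\sigma\mapsto F_Y(a-X;\sigma)$ on the sign--specific event where we are integrating.

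For \eqref{eq: ThesisITM}, on the event $\{X<a\}$ we have $a-X>0$, so the standing hypothesis $F_Y(y;\sigma_1)>F_Y(y;\sigma_2)$ for $y>0$ applies directly with $y=a-X$. Multiplying by the strictly positive weight $g_1(X)\1_{\{X<a\}}$ and taking expectations gives the desired strict inequality, provided that $\P(X<a)>0$ (without which both sides vanish and there is nothing to prove); the strictness of the inequality is preserved by integration because the integrand is pointwise strictly positive on a set of positive measure.

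For \eqref{eq: ThesisOTM}, on the event $\{X>a\}$ we have $a-X<0$, which lies outside the range covered by the stated hypothesis. Here I would invoke the symmetry of the law of $Y_\sigma$ about the origin (satisfied in the intended application, where $Y_\sigma$ is $\N(0,\sigma^2)$--distributed under item \ref{enum: IndependentB} of \Cref{thm: DecreasingPrepaymentPrice}), which gives $F_Y(y;\sigma)=1-F_Y(-y;\sigma)$ for all $y$, hence the reversed monotonicity $F_Y(y;\sigma_1)<F_Y(y;\sigma_2)$ for every $y<0$. Applying this to $y=a-X$ on $\{X>a\}$ and multiplying by the strictly positive weight $g_2(X)\1_{\{X>a\}}$ yields \eqref{eq: ThesisOTM} after integration.

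The main obstacle is that the lemma as stated only hypothesizes the monotonicity of $F_Y(\,\cdot\,;\sigma)$ on the positive half--line, so the proof of \eqref{eq: ThesisOTM} requires the additional ingredient of symmetry (or, equivalently, a second clause asserting reversed monotonicity on the negative half--line). I would therefore state this symmetry explicitly at the outset of the proof, noting that it is automatic in the setting in which \Cref{lem: Lemma} is invoked (centered Gaussian $b(t_k;\eta_b)$ under the hypotheses of the theorem). The only remaining care point is verifying that the two events $\{X<a\}$ and $\{X>a\}$ each carry positive probability for the swap--rate distribution $X=\kappa(t_k)$ used in the main proof; this is immediate since $\kappa(t_k)$, under the forward measure $\Q_{t_k}$, has a (log)normal--type law with full support.
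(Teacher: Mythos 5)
Your proposal is correct and follows essentially the same route as the paper: independence collapses each expectation to $\int g_i(x)f_X(x)F_Y(a-x;\sigma)\,\d x$, and the comparison reduces to the pointwise monotonicity of $\sigma\mapsto F_Y(a-x;\sigma)$ on the relevant half-line. You are in fact more careful than the paper on \eqref{eq: ThesisOTM}, which the paper dispatches with ``a similar argument'': there $a-X<0$, so the stated hypothesis (monotonicity of the CDF only for $y>0$) does not suffice on its own, and your explicit appeal to the symmetry of the centered Gaussian $Y_\sigma$ in the intended application (equivalently, reversed monotonicity of $F_Y(\cdot;\sigma)$ on the negative half-line) supplies exactly the ingredient the lemma's hypotheses omit.
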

\begin{proof}
By the definition of the expectation operator, and using the independence of $X$ and $Y_\sigma$, we get:
    \begin{equation*}
        \begin{aligned}
            \E\big[g_A(X)\1_A(X,Y_{\sigma})\big]&=\int_{-\infty}^{a}g_A(x)f_X(x)\int_{-\infty}^{a-x} f_Y(y;\sigma) \d y \d x,\\
            &=\int_{-\infty}^{a}g_A(x)f_X(x) F_Y(a-x;\sigma)\d x,
        \end{aligned}
    \end{equation*}
    where $f_X$ and $f_Y$ are the PDFs of $X$ and $Y$, respectively.
    The inequality in \eqref{eq: ThesisITM} is proved observing that the inequality $F_Y(y;\sigma_1)>F_Y(y;\sigma_2)$ holds for any $y>0$. A similar argument holds for \eqref{eq: ThesisOTM}.
\end{proof}

\subsection{Proof of \texorpdfstring{\Cref{prop: RecursiveRelationEPO}}{}}
\label{app: ProofProposition}

\begin{proof}[Proof of \Cref{prop: RecursiveRelationEPO}]
    \Cref{eq: PricingFormula} follows immediately from the definition of $V(t)$ given in \Cref{eq: PrepaymentValue}. Indeed, according to \eqref{eq: FutureCashflowsValue}, $\FV(t)$ is the value at time $t$ of all future cash flows excluding the very next one. The next cash flow value, namely the cash flow occurring at time $\tn(t)$, is obtained from the definition of the EPO cash flows in \eqref{eq: PrepaymentPayoff}. The discounted integral of the notional in \eqref{eq: PrepaymentPayoff} is split into a $t$-measurable (``known'') and a $ t$-non-measurable (``unknown'') part. Hence, we have:
    \begin{equation*}
        \begin{aligned}
            V(t)&=\E^{\Q}_t\bigg[\sum_{t_j\geq\tn(t)}\frac{M(t)}{M(t_j)}\CF(t_j)\bigg]\\
            &=\big(K-F(\tp(t);\tp(t),\tn(t))\big)\,\E^{\Q}_t\bigg[\frac{M(t)}{M(\tn(t))}\int_{\tp(t)}^{\tn(t)} N(\tau) \d\tau\bigg] + \E^{\Q}_t\bigg[\sum_{t_j>\tn(t)}\frac{M(t)}{M(t_j)}H(t_j)\bigg]\\
            &=\big(K-F(\tp(t);\tp(t),\tn(t))\big)\,\big(\CFk(t) + \CFu(t)\big)+\FV(t).
        \end{aligned}
    \end{equation*}
    
    Regarding the recursion given in \Cref{eq: RecursionH,eq: RecursionF,eq: RecursionHTrivial,eq: RecursionV}, \eqref{eq: RecursionHTrivial} is trivially obtained by definition. $\FV(t)$ in \eqref{eq: RecursionF} can be written in terms of a ``future'' $\FV(t_+)$ using the \emph{tower property} of conditional expectations, as long as the ``future cash flows'' are the same for $t$ and $t_+$, i.e. when $\tn(t)=\tn(t_+)$. 
    
    The case in \eqref{eq: RecursionV}, where $t$ has ``one more future cash flow'' than $t_+$, requires to include also its additional value, namely:
    \begin{equation*}
        \begin{aligned}
            \FV(t) &= \E^\Q\bigg[\frac{M(t)}{M(t_+)}\FV(t_+)+\frac{M(t)}{M(\tn(t_+))}\CF(\tn(t_+))\bigg]\\
            &=\E_t^\Q\bigg[\frac{M(t)}{M(t_+)}\E_{t_+}^\Q\bigg[\sum_{t_j>\tn(t_+)}\frac{M(t_+)}{M(t_j)}\CF(t_j)\bigg]+\frac{M(t)}{M(\tn(t_+))}\CF(\tn(t_+))\bigg]\\
            &=\E_t^\Q\bigg[\frac{M(t)}{M(t_+)}\E_{t_+}^\Q\bigg[\sum_{t_j\geq\tn(t_+)}\frac{M(t_+)}{M(t_j)}\CF(t_j)\bigg]\bigg]\\
            &=\E_t^\Q\bigg[\frac{M(t)}{M(t_+)}V(t_+)\bigg].
        \end{aligned}
    \end{equation*}

    The only remaining case, in \eqref{eq: RecursionH}, is obtained using the additive property of the integral. We have:
    \begin{equation*}
        \begin{aligned}
            \CFu(t)&=\E^\Q_t\bigg[\frac{M(t)}{M(\tn(t))}\int_{t}^{\tn(t)} N (\tau)\d \tau\bigg]\\
            &=\E^\Q_t\bigg[\frac{M(t)}{M(\tn(t))}\bigg(\int_{t}^{t_+} N (\tau)\d \tau+ \int_{t_+}^{\tn(t)} N (\tau)\d \tau\bigg)\bigg]\\
            &=\E_t^\Q\bigg[\frac{M(t)}{M(t_+)}P(t_+;\tn(t))\int_{t}^{t_+} N(\tau)\d\tau\bigg]+\E_t^\Q\bigg[\frac{M(t)}{M(t_+)}\E_{t_+}^{\Q}\bigg[\frac{M(t_+)}{M(\tn(t))}\int_{t_+}^{\tn(t)} N(\tau)\d\tau\bigg]\bigg],
        \end{aligned}
    \end{equation*}
    where \eqref{eq: RecursionH} follows since $\tn(t)=\tn(t_+)$.
\end{proof}

\section{EPO pricing algorithm}
\label{app: Algorithm}
In this appendix, we present the pseudo-code for the EPO pricing algorithm based on the results of \Cref{ssec: NumericalPricing}.

\begin{algorithm}[H]
\DontPrintSemicolon  
     \KwInput{Reset dates $\T_{\tt{r}}=\{t_0,t_1,\dots,t_{n-1}\}$, payment dates $\T_{\tt{p}}=\{t_1,t_2,\dots,t_{n}\}$ and time grid $\T_{grid}=\{\tau_0,\tau_1,\dots,\tau_{N_{steps}}\}$ such that $\T_{\tt{r}}\cup\T_{\tt{p}}\subset \T_{grid}$ and $\tau_0=t_0$ 
     State process $X_{i,k}:=[r_{i,k},b_{i,k},N_{i,k}]^\top$ for $i=1,\dots,N_{paths}$, $k=0,\dots,N_{steps}$}
     
     \KwOutput{Value process $V_{i,k}$ for $i=1,\dots,N_{paths}$ and $k=0,\dots,N_{steps}$}
    
     \tcp{Precompute money savings account $M(t)$, ZCB value $P(t;\tn(t))$, cash flow rate $K-L(\tp(t);\tp(t),\tn(t))$, and integrals $\int_{\tp(t)}^tN (\tau)\d \tau$ and $\int_{t}^{t_+} N(\tau)\d\tau$}
     $M_{i,k} := {\exp} \{{\tt{trapezoid}}_{t_0}^{\tau_k}(r_{i,\cdot},\tau_{\cdot})\}$ for $i=1,\dots,N_{paths}$ and $k=0,\dots,N_{steps}$\\
     $P_{i,k} := P(\tau_k;\tn(\tau_k))_i$ for $i=1,\dots,N_{paths}$ and $k=0,\dots,N_{steps}$\\
     $R_{i,k}:=K-L(\tp(\tau_k);\tp(\tau_k),\tn(\tau_k))_i$ for $i=1,\dots,N_{paths}$ and $k=0,\dots,N_{steps}$\\
     % $R_{i,0}:=0$ for $i=1,\dots,N_{paths}$\\
     $\Ik_{i,k}:={\tt{trapezoid}}_{\tp(\tau_k)}^{\tau_k}(N_{i,\cdot}, \tau_{\cdot})$ for $i=1,\dots,N_{paths}$ and $k=0,\dots,N_{steps}$\\
     $\Iu_{i,k}:={\tt{trapezoid}}_{\tau_{k-1}}^{\tau_{k}}(N_{i,\cdot}, \tau_{\cdot})$ for $i=1,\dots,N_{paths}$ and $k=1,\dots,N_{steps}$\\
    
    \tcp{Initialise the EPO final value}
    $V_{i,N_{steps}}=0$ for $i=1,\dots,N_{paths}$
    
    \tcp{Initialise $\CFu$ and $\FV$}
    $\CFu_{i}={\FV}_{i}=0$ for $i=1,\dots,N_{paths}$
    
   \tcp{Backward induction}
    \For{$k=N_{steps}-1,\dots,0$}{
        \tcp{Fit regressors according with that cases in \Cref{prop: RecursiveRelationEPO}}
        \If{$\tau_k\in\T_{\tt{r}}$} {
            \For{$i=1,\dots,N_{paths}$} {
                $Y^{\FV}_i=\frac{M_{i,k}}{M_{i,k+1}}V_{i,k+1}$\tcp*{\Cref{eq: RecursionV}}
                }
                $\widehat{h_k}\equiv 0$\\
                $\widehat{f_k}={\tt{fit}}(X_{\cdot,k},Y_{\cdot}^{\FV})$}  
        \Else{
            \For{$i=1,\dots,N_{paths}$} {
                $Y^{\CF}_i=\frac{M_{i,k}}{M_{i,k+1}}\Big(P_{i,k+1} \Iu_{i,k+1} + \CFu_i\Big)$\tcp*{\Cref{eq: RecursionH}}
                $Y^{\FV}_i=\frac{M_{i,k}}{M_{i,k+1}}\FV_{i}$\tcp*{\Cref{eq: RecursionF}}
                }
                $\widehat{h_k}={\tt{fit}}(X_{\cdot,k},Y_{\cdot}^{\CF})$ \\
                $\widehat{f_k}={\tt{fit}}(X_{\cdot,k},Y_{\cdot}^{\FV})$
        }         
        \For{$i=1,\dots,N_{paths}$} {
        \tcp{Update value process $V_{i,k}$}
        \If{$\tau_k>t_n$} {$V_{i,k}=0$ \tcp*{No outstanding payments}}
        \Else {
        \tcp{Compute $\CFk$}
        $\CFk_i=P_{i,k}\Ik_{i,k}$\tcp*{\Cref{eq: IntegralKnown}}
        \tcp{Update $\CFu$ and $\FV$}
        $\CFu_i\approx\widehat{h_{k}}(X_{i,k})$\tcp*{\Cref{eq: TruncatedDoobH}}
        $\FV_i\approx\widehat{f_k}(X_{i,k})$\tcp*{\Cref{eq: TruncatedDoobF}}
        $V_{i,k}=R_{i,k} \cdot (\CFk_i + \CFu_i) + \FV_i$
        \tcp*{\Cref{eq: PricingFormula}}}     
        }
    }
\caption{EPO pricing with LSM.}
\label{alg: Algorithm}
\end{algorithm}

\end{document}